\definecolor{tabpurple}{RGB}{146, 104, 172}
\newtheorem{claim}{Claim}
\newcommand\ddfrac[2]{\frac{\displaystyle #1}{\displaystyle #2}}
\begin{document}

\title{Group word dynamics from local random matrix Hamiltonians and beyond}

\author{Kl\'ee Pollock}
\email{kleep@iastate.edu}
\affiliation{Department of Physics and Astronomy, Iowa State University, Ames, Iowa 50011, USA}
\affiliation{Department of Physics, The Pennsylvania State University, University Park, Pennsylvania 16802, USA}

\author{Jonathan D. Kroth}
\email{jdkroth@iastate.edu}
\affiliation{Department of Physics and Astronomy, Iowa State University, Ames, Iowa 50011, USA}

\author{Jonathon Riddell}
\affiliation{School of Physics and Astronomy, University of Birmingham, Edgbaston, Birmingham, B15 2TT, UK}

\author{Thomas Iadecola}
\affiliation{Department of Physics and Astronomy, Iowa State University, Ames, Iowa 50011, USA}
\affiliation{Ames National Laboratory, Ames, Iowa 50011, USA}
\affiliation{Department of Physics, The Pennsylvania State University, University Park, Pennsylvania 16802, USA}
\affiliation{Institute for Computational and Data Sciences, The Pennsylvania State University, University Park, Pennsylvania 16802, USA}
\affiliation{Materials Research Institute, The Pennsylvania State University, University Park, Pennsylvania 16802, USA}

\begin{abstract}
We study one dimensional quantum spin chains whose nearest neighbor interactions are random matrices that square to one. By employing free probability theory, we establish a mapping from the many-body quantum dynamics of energy density in the original chain to a single-particle hopping dynamics when the local Hilbert space dimension is large. The hopping occurs on the Cayley graph of an infinite Coxeter reflection group. Adjacency matrices on large finite clusters of this Cayley graph can be constructed numerically by leveraging the automatic structure of the group. The density of states and two-point functions of the local energy density are approximately computed and consistent with the physics of a generic local Hamiltonian: Gaussian density of states and thermalization of energy density. We then ask what happens to the physics if we modify the group on which the hopping dynamics occurs, and conjecture that adding braid relations into the group leads to integrability. Our results put into contact ideas in free probability theory, quantum mechanics of hyperbolic lattices, and the physics of both generic and integrable Hamiltonian dynamics.
\end{abstract}
	
\maketitle

\section{Introduction}

Random matrix theory (RMT) has historically provided key insights into the physics of quantum systems in various contexts and limits. Examples include the derivation of universal level spacing statistics of systems with classically chaotic limits \cite{bohigas_characterization_1984}---which also serves as a definition of quantum chaos for systems without classical limits---universal conductance fluctuations in cavities \cite{beenakker_1997}, emergence of two dimensional quantum gravity from randomly interacting quantum particles [the Sachdev-Ye-Kitaev (SYK) model]~\cite{kitaev_simple_2015}, Anderson and many-body localization physics \cite{kravtsov_2015}, and the quantum entanglement and thermal equilibrium properties of random quantum mechanical states \cite{page_1993}. More recently, significant analytical and conceptual progress has been achieved by analyzing dynamics generated by random local quantum gates forming a quantum circuit. Questions such as how entanglement and operators grow dynamically, the interplay between this phenomenon and diffusion of spin density \cite{khemani_operator_2018}, and the general emergence of dissipative semiclassical hydrodynamics from many-body quantum systems can now be treated in broad contexts~\cite{singh_2025}.

In this article we are primarily interested in the the typical dynamics of a conserved energy density in some generic local Hamiltonian system on a 1D lattice. Ultimately, one would like to prove that such a system exhibits energy diffusion. To study this question, we follow the tradition of looking to RMT for some universal insights. However, in moving to RMT, we need to ensure spatial or geometric locality of interactions. Without this, the dynamics of energy density would be somewhat trivial: some excitation of energy would immediately spread everywhere in the system. Another ingredient is total energy conservation so that energy can be transported but not be created or destroyed, so our work must necessarily deviate from the body of literature on random quantum circuits. Once working in RMT, another important observable is the many-body density of states (which governs equilibrium physics), so we also focus on this quantity and study how locality impacts its form as the system grows in size. 

While these requirements are conceptually simple, from an RMT point of view they are highly nontrivial and only recently has there been some progress in the mathematical and physical literature in treating a Hamiltonian RMT with spatially local interactions, even just at the level of the many-body density of states~\cite{bellitti_2019,morampudi_many-body_2019,collins_spectrum_2023,charlesworth_matrix_2021}. We note that a derivation of energy diffusion has been obtained in a chain of coupled SYK dots in~\cite{gu_local_2017}, and the derivation of an emergent Markov process for energy density, which can numerically be seen to be diffusive, in a different local random matrix theory, has been discussed recently in~\cite{chalker_chaotic_2025}.

In the process of treating a particular 1D spatially local random matrix model in which the energy density operators are constrained to square to one, we find that free probability, the theory of infinite discrete groups, and the quantum mechanics of hyperbolic lattices interplay in an interesting way. Free probability is a framework for dealing with non-commuting random variables such as random matrices. In particular, it defines an analogous notion of independence, called free independence, which plays a key role in calculations~\cite{speicher_free_2019}. Such ideas have found applications in quantum chaos theory through the idea of a ``full" eigenstate thermalization hypothesis~\cite{foini_2019,pappalardi_2022}, and more recently in random circuits~\cite{fritzsch_free_2025}. In our work, free probability will again play an essential role in calculations.

After application of free probability, we are brought to the combinatorics of an infinite discrete group, called a right-angled Coxeter group. Coxeter groups are generalized reflection groups which naturally describe the symmetries of tessellations of Euclidean and hyperbolic spaces. There has been recent effort to understand the quantum mechanics of hyperbolic tessellations~\cite{mosseri_2023,lux_spectral_2024,lenggenhager_non-abelian_2023,maciejko_hyperbolic_2021,shankar_hyperbolic_2024} because of applications in error correction~\cite{breuckmann_2016} as well as table-top experiments which effectively simulate quantum mechanics in curved space~\cite{chen_hyperbolic_2023}. A method applicable for the hopping problem on those lattices will also be of use here. We remark that our work is not the first appearance of Coxeter groups in free probability: reference~\cite{fendler_central_2003} derived the limiting distribution corresponding to a central limit theorem for a certain \emph{random} Coxeter group.

\subsection{Summary of results}

Our results and organization are as follows. In Sec.~\ref{sec:hi_rmt_model}A, we introduce a geometrically local random matrix model in 1D whose energy density operators square to one, which we dubbed the Haar-Ising (HI) random matrix model in previous work~\cite{pollock_2025}. We first show, in Sec.~\ref{sec:hi_rmt_model}B, that free probability theory implies a very simple combinatorics for correlation functions of energy density in this model; any infinite temperature energy density correlator of the form $\braket{h_{i_1}\cdots h_{i_n}}$ is either unity or vanishing in the limit of large Hilbert space dimension $q$. 

Due to the geometry of the model, i.e. nearest neighbor interactions on a chain with periodic boundary conditions (PBC), the resulting combinatorics of those correlators is captured precisely by a right-angled Coxeter reflection group, which we discuss in Sec.~\ref{sec:tools}A. The Cayley graphs for these groups are infinite undirected regular graphs and, for some small cases, namely length $L=3,4,5$, PBC RMT chains correspond to tessellations of the hyperbolic disk by polygons. This is discussed in Sec.~\ref{sec:tools}B, but we leave open the question of their geometry for larger $L$, where higher-dimension spaces will be necessary. In order to make calculations on these graphs possible, one needs a way to enumerate and order unique group elements in an efficient way. This is possible to do when the group admits an \emph{automatic structure}, which we introduce and discuss in Sec.~\ref{sec:tools}C.

Next, we take the combinatorial correspondence between the local HI chain and certain right-angled Coxeter groups further. We show that an arbitrary space-time correlation function of energy density in the original model can be written as a \emph{single-particle hopping process} on the corresponding Cayley graph. This represents a kind of random-matrix/single-particle duality that is our main conceptual result. Since the particle hops on the Cayley graph of a group, we refer to this as \emph{group word dynamics}. A set up for the group word hopping problem and the above duality appears in Secs.~\ref{sec:hopping}A,B. We remark that a somewhat different kind of group word dynamics has also been recently discussed in~\cite{balasubramanian_2024}. In that work, group words are configurations in a many-body Hilbert space on the lattice and they consider a dynamics that is kinetically constrained by relations in the group. In our work, the dynamics under the group is single-particle.

The group word hopping occurs on a hyperbolic lattice that lacks abelian translation symmetry, which makes usual momentum space methods inapplicable. This is where the automatic structure of the group can be exploited. In Sec.~\ref{sec:hopping}C we outline an efficient recursive algorithm that uses the automatic structure to construct huge adjacency matrices for the Cayley graphs. This is necessary for convergence since, unlike Euclidean lattices, there are a large fraction of sites at the boundary in open boundary conditions (OBC). This algorithm represents our main technical result.

Using these adjacency matrices, in Sec.~\ref{sec:calculations} we first study the density of states of various HI random-matrix chains up to length $L=14$ and observe how they slowly evolve from RMT-like forms with sharp band edges towards a Gaussian with tails, consistent with spatial locality of interactions. We then study two-point functions $\braket{h_i(t)h_j(0)}$ of energy density for primarily the $L=5$ chain, where the numerical results are best converged. We see how an excitation of energy density on some bond spreads to other bonds in a manner consistent with the locality of the interactions. We also observe thermalization dynamics, with correlators approaching $L^{-1}$ after an $O(1)$ amount of time as $q\rightarrow \infty$, suggesting that the large-$q$ limit is physically meaningful in this model.

Finally, in Sec.~\ref{sec:integrable} we ask if the group word hopping picture can be interesting in other contexts. To this end, we ask what happens if we formally modify the group from the right-angled Coxeter group originally introduced, to something ``smaller." In particular, we consider adding braid relations into the group. The braid relations are reminiscent of recently discussed “parameter-less” or “set-theoretic” Yang-Baxter relations in the context of cellular automata or circuits \cite{gombor_2022}. We observe that with the braids, the dynamics of energy density becomes less generic and more special, for example displaying pronounced fluctuations. Adding one more relation into the group, we get the symmetric group $S_L$ for a length $L$ chain, leading to interesting exact revivals of energy density for $L=4,5$ chains. We also find that braids allow us to construct at least two more nontrivial local conserved quantities beyond energy at the level of the group algebra using the so-called boost operator. We see no obstruction to generating extensively many such charges and this leads us to conjecture the system to be integrable.

We conclude in Sec.~\ref{sec:conclude} by introducing more examples of the group word hopping. One can turn the RMT/single-particle duality around: given a hyperbolic tesselation, can we construct a ``matrix model" for this hopping problem? We give the example for a $\{4,6\}$ tiling, which is not included in the PBC nearest neighbor geometries so far discussed.

\begin{figure}
    \centering
    \includegraphics[width=0.8\linewidth]{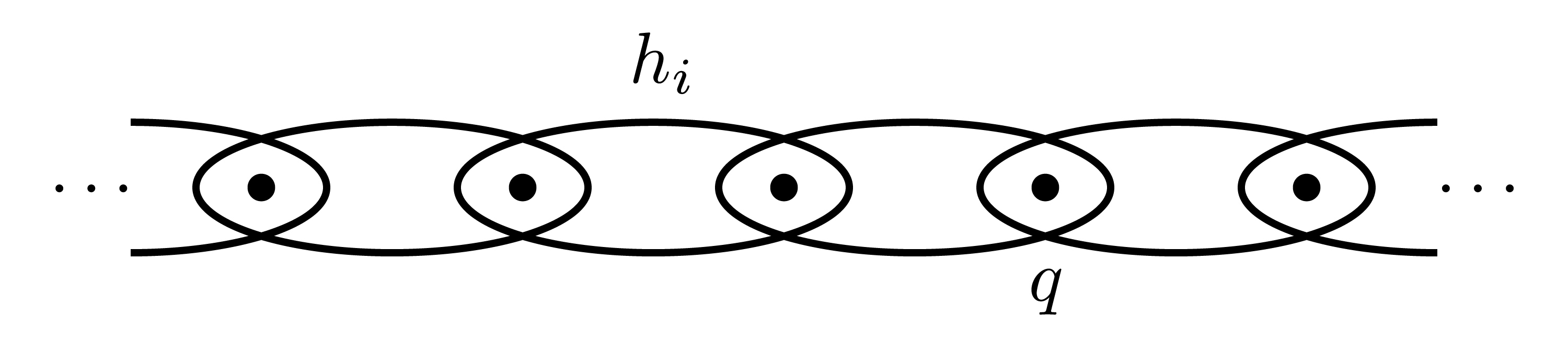}
    \caption{Geometry of the local Haar-Ising random matrix spin chain.}
    \label{fig:local_chain}
\end{figure}

\section{Local Haar-Ising random matrix model}\label{sec:hi_rmt_model}

\subsection{Model}

We consider the following ensemble of Hamiltonians acting on a chain of $L$ sites, each with local Hilbert-space dimension $q$:
\begin{equation}
    H = \sum_{i=1}^L h_i,
\end{equation}
where $h_i$ are independent random matrices coupling sites $i$ with $i+1$ and so act non-trivially only on the Hilbert space $\mathbb{C}^q \otimes \mathbb{C}^q$ corresponding to those sites, as in Fig.~\ref{fig:local_chain}. We focus exclusively on periodic boundary conditions (PBC) throughout, whereby $L+1 \equiv 1$. We adopt the conditions that $h_i^2=I$ and $\text{tr} h_i = 0$. Beyond these constraints, we take the eigenvectors to be maximally ergodic within their local Hilbert space by setting
\begin{equation}
\label{eq:def_local_hi}
    h_i = U_i \Lambda U_i^\dagger \otimes I_{\overline{i, i+1}},
\end{equation}
where $U_j$ are independent and Haar-randomly drawn from the unitary group $\mathcal{U}(q^2)$, while $\Lambda^2=I$, $\text{Tr}(\Lambda)=0$, and $I_{\overline{i, i+1}}$ denotes an identity matrix acting on all sites not $i,i+1$. We call the random matrices generated in this way ``Haar-Ising" since the local terms have eigenvalues constrained to be $\pm 1$.  This model is interesting because it admits a particularly simple combinatorics, while maintaining all of the crucial aspects of a true spatially local random matrix Hamiltonian.

\begin{figure*}
    \centering
    \includegraphics[width=0.9\linewidth]{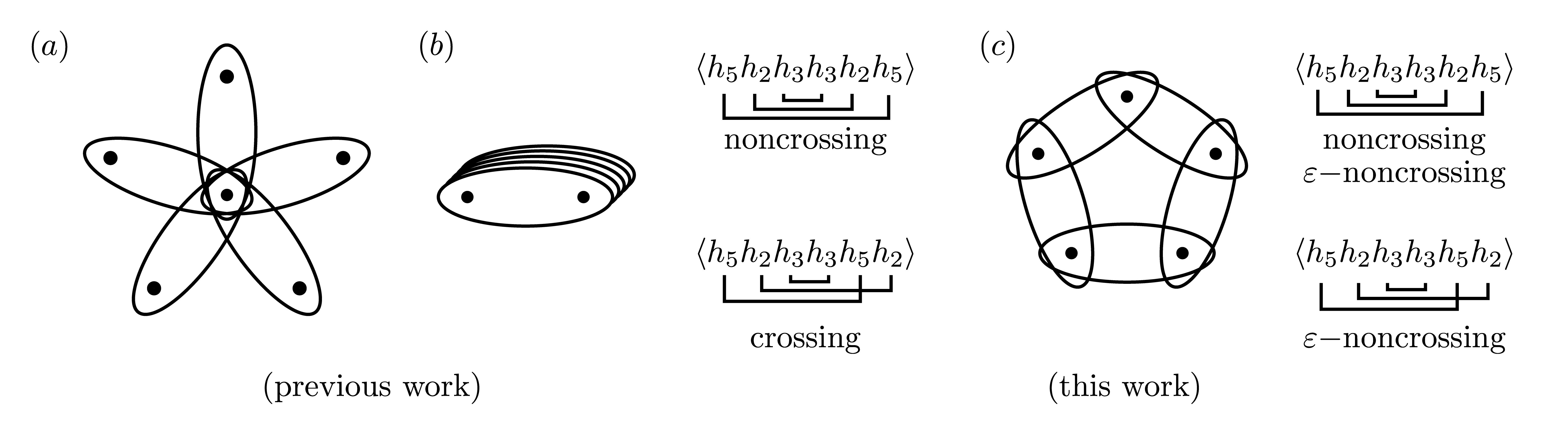}
    \caption{Various random matrix models involving $5$ HI bonds. Examples leading to standard freeness are illustrated in (a) and (b). In (a), the Hamiltonian is more ``local" than (b) in the sense that it is vanishingly sparse as $q\rightarrow\infty$ but neither are geometrically local as far as energy density operators are concerned. Panel (c), however, is an $L=5$ geometrically local chain corresponding to $\varepsilon$-freeness for $\varepsilon$ defined as in Eq.~\eqref{eq:varepsilon}; this type of geometry is the focus of the paper.}
    \label{fig:free_vs_eps_free} 
\end{figure*}

\subsection{Free probability in the HI context}

Here, we give a brief exposition of some ideas from free probability and $\varepsilon$-free probability, but specialized to our particular setting where $h_i^2=I$. In general, free probability treats random variables which do not commute. This topic can be discussed in an algebraic way~\cite{nica_lectures_2006} or in the context of random matrices~\cite{mingo_free_2017}, which serve as examples of free variables when the matrix dimension goes to infinity; our primary interest in this section is the latter. Our goal is to make statements about infinite temperature equal-time correlation functions of the form
\begin{equation}\label{eq:corr}
    \braket{h_{i_1} \cdots h_{i_n}},
\end{equation}
where $i_1 \cdots i_n$ is an arbitrary word of indices with each $i_j \in [L]$, where we define the set $[L]=\{1,2,\dots ,L\}$. The average $\braket{\cdot}$ is defined as
\begin{equation}\label{eq:trace}
    \braket{\cdot} = q^{-L}\ \mathbb{E}\ \text{Tr}(\cdot)
\end{equation}
and the average $\mathbb{E}$ is over all present terms. In our case, this is taken with respect to the uniform (Haar) measure on $\mathcal{U}(q^2)^L$, through the eigenvectors $U_i$ of each term.

\subsubsection{Standard free independence}

Let us first consider what standard free probability and random matrix theory has to say about equal-time infinite temperature correlation functions (also known as mixed moments) of the form Eq.~\eqref{eq:corr} when all terms fail to commute. In Fig.~\ref{fig:free_vs_eps_free}, models (a) and (b), constructed out of 2-body HI interactions, are examples where all terms mutually do not commute. It turns out the collection of terms in these models are \emph{asymptotically freely independent} as $q\rightarrow\infty$. Here, asymptotic freeness is a particularly simple statement which demands that
\begin{equation}
    \braket{h_{i_1} \cdots h_{i_n}} \rightarrow 0\quad  \text{as} \quad q\rightarrow \infty
\end{equation}
whenever the index word $i_1\cdots i_n$ has an odd number of any particular $i$ present, or when the word has only an even number of each $i$ present, but contains a \emph{crossing}. By a crossing, we mean the following: given a group word, draw lines that group together all identical symbols below the word, as in Fig.~\ref{fig:free_vs_eps_free} (i.e. draw the partition of $[n]$ induced by the index word). If all groups can be drawn in a planar way (i.e. the lines can be drawn so that no two lines cross), the word is called non-crossing. Otherwise it is crossing.

\subsubsection{Notions of locality}

The fact that the terms are freely independent in (a) follows from our previous work~\cite{pollock_2025} where we showed that two bonds overlapping on a single site is sufficient to induce freeness as $q\rightarrow\infty$ due to the Haar-random eigenvectors. That the same holds for model (b) is almost a standard result in free probability~\cite{speicher_free_2019}. Freeness in both scenarios (a,b) and the $\varepsilon$-free scenario we will next discuss are all subsumed by the results obtained in Ref.~\cite{collins_spectrum_2023}. From the point of view of an observable involving only energy density operators, these two models are indistinguishable for $q\rightarrow \infty$, for example the density of states or $n$-point functions of energy density, and we derived such expressions for $z$ terms arranged as in (a) or (b) in our previous paper. However, neither are geometrically or spatially local from this point of view: an excitation of energy density at some bond will immediately spread to all other bonds. Furthermore, if we consider $z$ such terms arranged in either way, for large $z$, we will obtain a semicircular density of states of width $\sqrt{z}$ as opposed to a Gaussian of width $\sim\sqrt{z}$, the result expected for a many-body system with short-range interactions. There is a sense in which (a) is ``more local" than (b), however; (a) is vanishingly sparse as $q\rightarrow\infty$, unlike (b), which is dense, so the physics may be different at finite $q$.

\subsubsection{\texorpdfstring{$\varepsilon$-free independence}{epsilon-free independence}}

We now contrast standard freeness with the case of interest in our paper: a nearest-neighbor spin chain where $h_i$ and $h_j$ commute when $|i-j|>1$, and since they are drawn independently they behave as classically independent random variables. Neighboring bond operators $h_i$ and $h_{i+1}$, however, are still freely independent. In particular, we have
\begin{equation}
    \braket{(h_i h_{i+1})^m} \rightarrow 0
\end{equation}
for any $m\geq 1$. This scenario, which combines classical and free independence, falls under the umbrella of $\varepsilon$-freeness in the mathematical literature. For a collection of $L$ variables, one expresses this via a symmetric $L \times L$ matrix $\varepsilon$ that contains a unit entry $\varepsilon_{ij} =1$ if $h_i$ and $h_j$ are classically independent for $i\neq j$ and a zero entry $\varepsilon_{ij} =0$ if $h_i$ and $h_j$ are freely independent for $i\neq j$. The diagonal entries are meaningless. The $L=5$ chain with periodic boundary conditions as in Fig.~\ref{fig:free_vs_eps_free}(c) has
\begin{equation}\label{eq:varepsilon}
    \varepsilon = 
    \begin{pmatrix}
    0 & 0 & 1 & 1 & 0 \\
    0 & 0 & 0 & 1 & 1 \\
    1 & 0 & 0 & 0 & 1 \\
    1 & 1 & 0 & 0 & 0 \\
    0 & 1 & 1 & 0 & 0
    \end{pmatrix}.
\end{equation}
Note that setting all non diagonal entries to one corresponds to standard freeness. 

One can again establish for this model a similar statement about correlation functions. We will call a word $\varepsilon$-crossing if it contains a crossing between letters that do not commute. Now,
\begin{equation}
    \braket{h_{i_1} \cdots h_{i_n}} \rightarrow 0\quad  \text{as} \quad q\rightarrow \infty
\end{equation}
whenever the index word $i_1\cdots i_n$ has an odd number of any particular $i$ present, or when the word has only an even number of each $i$ present, but is $\varepsilon$-crossing. See Fig.~\ref{fig:free_vs_eps_free}(c) for an example of a word which is $\varepsilon$-noncrossing but crossing.

Consequently, under the infinite temperature random matrix average, we have for an arbitrary word $i_1 \cdots i_n$ that
\begin{equation}
    \braket{h_{i_1} \cdots h_{i_n}} = 
    \begin{cases}
        1\quad \mathrm{if}\quad h_{i_1} \cdots h_{i_n} = I\\
        0 \quad \mathrm{otherwise}
    \end{cases}
\end{equation}
in the limit of $q\rightarrow \infty$. That is to say, if the operators present in the correlation function can be shuffled around (subject to the constraint that the nearest neighbors fail to commute) to obtain the identity matrix, the correlator will survive a large-$q$ average. Otherwise, it does not contribute anything in that limit.

\section{Tools from geometric and computational group theory}\label{sec:tools}

\begin{figure*}
    \centering
    \includegraphics[width=0.87\linewidth]{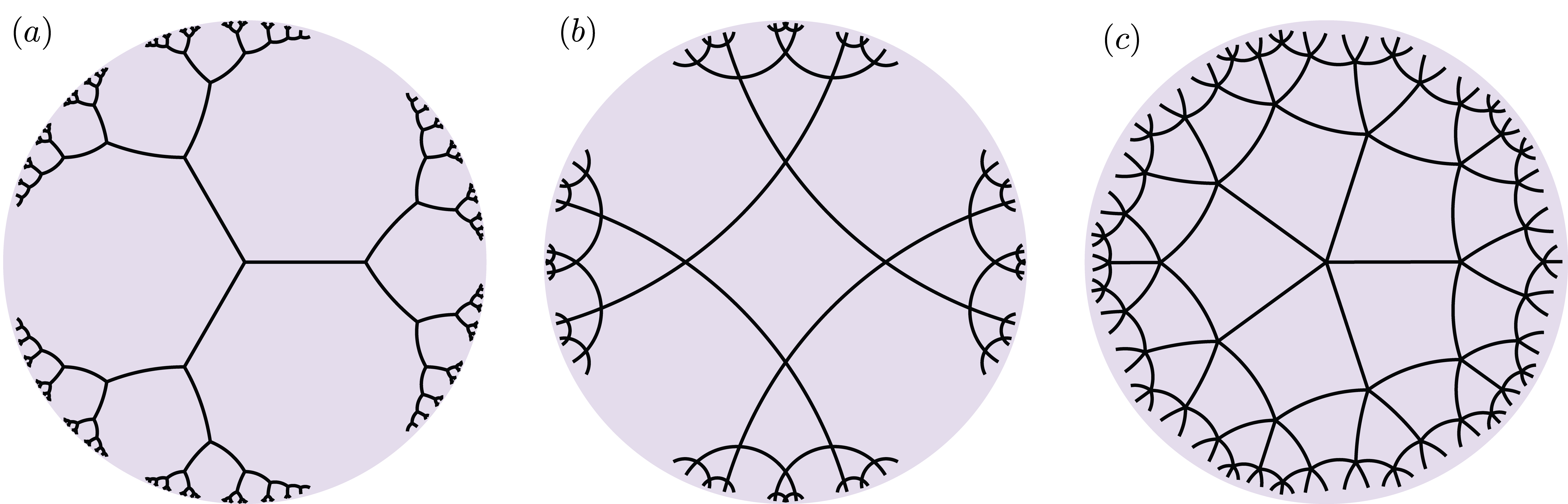}
    \caption{Finite portions of the Cayley graphs of $G_L$, with respect to the generators $g_i$ for $L=3,4,5$ in panels $(a),(b),(c)$, respectively. The graphs are embedded in the hyperbolic disk to emphasize their hyperbolic nature. The graphs can thus be viewed as tilings of the hyperbolic disk by polygons. In (a), $3$ \emph{apeirogons} meet at each vertex, in (b) two squares and two apeirogons meet, and in (c) five squares meet.}
    \label{fig:graphs}
\end{figure*}

\subsection{The Coxeter group of interest}

Let us take a slight detour away from the random matrix problem in order to introduce some ideas from the theory of infinite discrete groups. Consider the group $G_L$ generated by $\{g_j\}_{j=1}^L$ obeying certain relations inherited from the original asymptotic random matrix problem or equivalently the abstract $\varepsilon$-free algebraic problem. The group for the 1D nearest neighbor HI chain is presented by
\begin{equation}\label{eq:group}
    G_L = \braket{\ g_j\  |\ g_j^2,\  (g_i g_j)^2\ \text{if}\ |i-j|>1\ }, 
\end{equation}
the notation meaning that we have a group with generators $\{g_j\}_{j=1}^L$, which can be considered initially as the free group (no relations) on those generators. We then impose certain relations. Here, we are imposing the relations
\begin{equation}
    g_i^2 = e,\quad g_i g_j = g_j g_i \ \text{when}\ |i-j|>1.
\end{equation}

Let us make a few observations about this group. First, it is infinite: there are infinitely many words of the form $(g_ig_{i+1})^k$, for example. This is what is meant by freeness of $g_i$ and $g_{i+1}$: there is no further relation between them. The subgroup of $G_L$ generated by $g_i$ and $g_{i+1}$ is the group
\begin{equation}
    \braket{ g_i, g_{i+1} } = \mathbb{Z}_2 *\mathbb{Z}_2.
\end{equation}
Here, $G * H$ is the free product of groups $G$ and $H$: the group of words formed by generators of $G$ and $H$ without further relations beyond those inherited from $G$ and $H$ individually.

Second, the family of groups $G_L$ are examples of Coxeter groups, which are discrete groups with at most pairwise relations among a finite set of otherwise free generators which square to the identity. Coxeter groups were introduced as generalizations of the groups of symmetries of regular polytopes. $G_L$ are also right-angled Coxeter groups, meaning the only relation between generators is the specification of whether or not they commute. In Sec.~\ref{sec:integrable} we will consider other Coxeter groups which are not right-angled. Combinatorial~\cite{kolpakov2020} and geometric aspects~\cite{dani_large-scale_2018} of right-angled Coxeter groups are still active topics in the mathematical literature.

\subsection{The Cayley graph}

Given a group $G$ and a set of generators $S$ of $G$, the Cayley graph of $G$ with respect to $S$ is the following graph~\footnote{Cayley graphs are often defined as directed graphs. In the Coxeter group context, where generators are self-inverse, this is unnecessary.}: for each group element $g \in G$, there is a node. Two nodes corresponding to $g,h \in G$ are then connected by an edge if there is a $g_i \in S$ such that $g = g_i h$ or $g = h g_i$. The simplest group in the family Eq.~\eqref{eq:group} has $L=2$, for which the Cayley graph is a one dimensional infinite chain, which we implicitly discussed in Ref.~\cite{pollock_2025}. The structure is more interesting for larger $L$. Fig.~\ref{fig:graphs} shows the graphs for $L=3,4,5$, which are planar. 

As an aside, we remark that when the Cayley graph of a Coxeter group is embedded in an appropriate metric space, the Coxeter group is a reflection group in the following sense: given a node $h$ and a generator $g_i$, the action of $g_i$ on $h$ is to reflect $h$ across the hyperplane perpendicular to the edge connecting $h$ and $g_i h$ into $hg_i$. In Figure \ref{fig:graphs}, the hyperplanes are curves meeting the edges of the graphs at right angles (with respect to a hyperbolic metric).

% Some of these small $L$ examples decompose into free products:
% \begin{gather}
%     G_2 = \mathbb{Z}_2 * \mathbb{Z}_2 \\
%     G_3 = \mathbb{Z}_2 * \mathbb{Z}_2 * \mathbb{Z}_2\\
%     G_4 = (\mathbb{Z}_2 \times \mathbb{Z}_2) * (\mathbb{Z}_2 \times \mathbb{Z}_2).
% \end{gather}

The $L=3$ Cayley graph is a tree; in that case, $\varepsilon$-freeness corresponds to standard freeness. Geometrically, the Cayley graph forms a $\{p,q\}=\{\infty,3\}$ tiling of the hyperbolic disc. Here, the Schl\"afli symbol $\{p,q\}$ refers to a tiling of the plane by $p$-gons, where $q$ such $p$-gons meet at every vertex. If $(p-2)(q-2) = 4$, this is the Euclidean plane. If $(p-2)(q-2) > 4$, it is the hyperbolic plane. The $L=4$ graph is also known as a Husimi cactus graph~\cite{mosseri_bethe_1982}, a tiling of the plane by squares and apeirogons. The $L=5$ graph is a regular tiling of the hyperbolic disk with $\{p,q\} = \{4,5\}$. For $L\geq 5$, the group does not decompose into a free product $A*B$~\cite{office_hours_2017}. This rules out the use of the well-developed free-convolution method of free probability, which produces the Green's function of a sum of two free variables~\cite{speicher_free_2019}. We note that, for generators which square to one, the statement that $g_i$ and $g_j$ commute corresponds to a loop of length $4$. In general, each node of the Cayley graph will participate in $L(L-3)/2$ loops for $L\geq 3$. Therefore, for $L\geq 6$ the Cayley graphs will not embed into a plane, and we leave a study of their precise geometry for future work.

Going forward, we define the \emph{radius} $d$ of an element $g\in G_L$ as the graph distance going from $g$ to $e$ on the Cayley graph; that is, the length of a geodesic path from $g$ to $e$. As will become clearer in the next section, this is also the length of a shortest word written in the generators $g_i$ which equals $g$ as a group element.

\subsection{The automatic structure}

\begin{figure}
    \centering
    \includegraphics[width=0.85\linewidth]{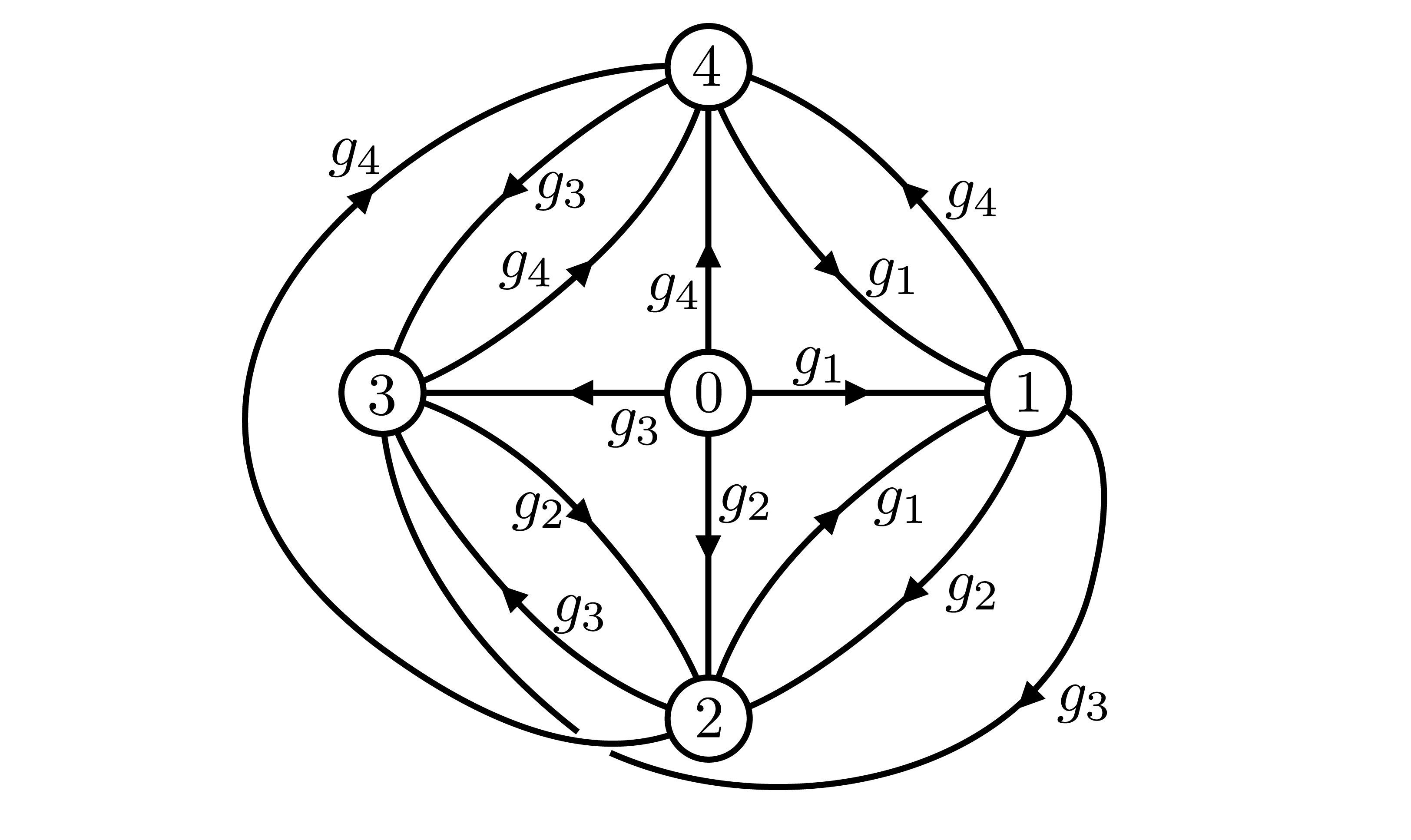}
    \caption{The finite state automaton $W$ which recursively enumerates all words in the language $\mathcal{L}$ of short-lex geodesic words for the group $G_4$ corresponding to an $L=4$ local HI RMT chain in PBC. The nodes of the automaton are states and transitions correspond to concatenation of the letter $g_i$ to a previously allowed word $g\in \mathcal{L}$. The recursive step consists of generating all words of radius $d+1$ by concatenating all words of radius $d$ with a single generator $g_i$. Given the state of a radius $d$ word, $W$ dictates which letters $g_i$ can be concatenated to form radius $d+1$ words and the corresponding states. See the text for further details.}
    \label{fig:automaton}
\end{figure}

Elements of the discrete group $G_L$ are represented by abstract group words in the generators. To perform calculations, however, it will be crucial to identify a unique word of minimal length in the generators which represents each group element. In the computational group theory literature, the set of such words in the letters $g_i$ is called the language $\mathcal{L}$ of \emph{short-lex geodesic group words}. Given a group element $g$, one identifies the corresponding element of $\mathcal{L}$ by (1) assigning the shortest possible word in the letters $g_i$ which equals $g$ as a group element and (2) further choosing the first word appearing in the lexicographic order $g_1 > g_2 > ... > g_L$. For example, in $G_{L\geq 4}$, $g_1g_3 = g_3g_1$, but we choose to use $g_1g_3$ to identify this group element since $g_1 > g_3$ lexicographically. We make this comparison from left to right until a difference is found.

The definition of the short-lex geodesic words above is, however, not constructive: it is computationally prohibitive to produce all possible strings in the letters $g_i$ and try to reduce them to their minimal representative. A more efficient way to enumerate the words is to look to the group's \emph{automatic structure}. Among other things, this means that there exists a finite state automaton $W$, called the \emph{word acceptor automaton}, which can be used to recursively enumerate the language $\mathcal{L}$ of short-lex geodesic words. More specifically, given a length $d$ word in $\mathcal{L}$, the automaton can produce a length $d+1$ word with a cost that is independent of $d$. Not all finitely presented discrete groups are automatic as such, however, all finitely presented Coxeter groups are~\cite{holt_2023} and this is what makes calculations possible in our case.

The word acceptor automaton is best explained with an example. $G_4$ is a non-trivial example but has a reasonably simple $W$. The group is
\begin{equation}
    G_4=\braket{\ g_i\ |\ g_i^2,(g_1g_3)^2,(g_2g_4)^2\ }.
\end{equation}
The word acceptor automaton has, in this case, five states and 14 transitions, see Fig.~\ref{fig:automaton}. The word acceptor automaton dictates which letters $g_i$ can be recursively appended to enumerate all possible short-lex geodesic group words in $\mathcal{L}$. The recursion works as follows, starting on the trivial node labeled $0$ in the figure and corresponding to the trivial group word $e$. For each allowed transition, we concatenate the letter $g_i$ to $e$. We have produced so far the set $\{e,g_1,g_2,g_3,g_4\}$, i.e. all words up to and including radius $1$. In general, starting from a group word $g\in \mathcal{L}$ with radius $d$ and a state $s$ of $W$, which we also track, we build all new words formed by allowed transitions from $s$ by concatenation of the corresponding label. Following the recursion, one finds that the first few words in the resulting short-lex geodesic language are
\begin{multline}
    \mathcal{L} = \{e, 1,2,3,4,12,13,14,21,23,24,32,34,41,43,\\
    121,123,124,132,134,\dots\},
\end{multline}
where we have introduced the shorthand $ijk\dots \equiv g_ig_jg_k\dots$.
Each group word $g$ in this language, when read from left to right, corresponds to a geodesic path from $e$ to $g$ on the Cayley graph. The fact that the group words can be enumerated as such will ultimately allow us to construct huge adjacency matrices for Cayley graphs with $\sim 10^7$ nodes. We state the algorithm that does so in the following section.

Generation of the word acceptor automata was done using the KBMAG package in the computer algebra software GAP~\cite{holt_2023}. For the interested reader, we have linked our GAP code in a GitLab project \cite{adjacency_racg_2025}.

\section{Group word hopping}\label{sec:hopping}

\subsection{Setting up the hopping problem}

We are interested in computing single-trace infinite temperature random matrix averages in the local HI chain. We will ultimately see that any such quantity (static or dynamical) which involves only the energy density operators will admit a description in terms of a single particle hopping on the Cayley graph. Let us first introduce the players of the hopping problem on the Cayley graph of $G_L$ and then see how the correspondence with the original RMT model arises.

As discussed in Sec.~\ref{sec:tools}, the nodes on the Cayley graph correspond to group elements. Let us denote by $\ket{g}$ the state with a single particle sitting at node $g$. A general single particle wavefunction then has the form
\begin{equation}
    \ket{\psi} = \sum_{g\in G_L} \psi_g \ket{g}.
\end{equation}

To describe the hopping problem, we will also need to define the action of an arbitrary element $h\in G_L$ on this vector space. The most convenient way to is consider a (Hermitian and unitary) matrix representation $V_h$ of $h$ which acts as
\begin{equation}
    V_h = \sum_{g\in G_L} \ket{gh}\bra{g}.
\end{equation}
When $h$ is one of the group generators $g_i$, we write $V_i$ for its matrix representation. The hopping problem is then fully determined by the adjacency matrix $\Delta$ of the graph, given by
\begin{equation}
    \Delta = \sum_{i=1}^L V_i = \sum_{g\in G_L} \sum_{i=1}^L  \ket{gg_i} \bra{g}.
\end{equation}
This is simply the statement that from any node $g$, the particle can hop to any nearest neighbor with uniform weight.

\subsection{Random-matrix/single-particle duality}

Having set up all of the ingredients, we now state the correspondence between single-trace random matrix averages and the hopping problem. To go from one to the other, make the following replacements:
\begin{gather}\label{eq:correspondences}
    h_i\quad \longleftrightarrow\quad V_i\\
    H\quad \longleftrightarrow\quad \Delta\\ 
    \braket{\cdot}\quad \longleftrightarrow\quad \braket{e|\cdot|e}
\end{gather}
Let us demonstrate why these yield equivalent results for the density of states. This is essentially because powers of $H$ are a sum of all possible words of length $n$:
\begin{equation}
    \braket{H^n} = \sum_{i_1 \cdots i_n} \braket{h_{i_1} \cdots h_{i_n}}
\end{equation}
and, in the large $q$ limit, $\braket{h_{i_1} \cdots h_{i_n}}$ is either zero or one according to whether or not the word is reducible to the identity. This is precisely the statement that $\braket{h_{i_1} \cdots h_{i_n}} = \braket{e|V_{i_1} \cdots V_{i_n}|e}$ and so $\braket{H^n} = \braket{e|\Delta^n|e}$. Since the single-particle hopping problem and the random-matrix problem have the same sequence of moments, they have the same density of states~\footnote{This is because both have bounded support, i.e. no eigenvalues lie outside the range $|\epsilon| \leq L$}. While this type of correspondence is standard~\cite{nica_lectures_2006}, a conceptually non-trivial extension of this idea is to treat arbitrary space-time dependent correlation functions. We can also establish for arbitrary space-time points $(i_1,t_1),\cdots, (i_k, t_k)$ the duality
\begin{equation}
    \braket{h_{i_1}(t_1) \cdots h_{i_k}(t_k)} = \braket{e|V_{i_1}(t_1) \cdots V_{i_k}(t_k)|e}
\end{equation}
where, on the right-hand side, time evolution is implemented by the hopping Hamiltonian: $V(t)=e^{-i\Delta t} V e^{i\Delta t}$. This also follows by using that $H^n$ is a sum of all possible words, but this fact is somewhat deeper since it states that the average large-$q$ quantum many-body dynamics of energy density can be described entirely in terms of a single particle hopping dynamics. While this is a conceptually interesting result, one might wonder if we have only recast the difficult random matrix problem as another difficult problem. It turns out, however, that the automatic structure of the group can be used to obtain some highly nontrivial numerical solutions directly in the $q\rightarrow \infty$ limit.

% We remark that the correspondence is actually even more direct. In principle, any finite temperature averages also coincide:
% \begin{equation}
%     \braket{\cdot}_\beta \longleftrightarrow \braket{\cdot}^{(1)}_\beta
% \end{equation}
% where $\braket{\cdot}^{(1)}_\beta$ means the (Boltzmann statistics) \emph{single-particle} average at inverse temperature $\beta$. In our simulations, however, we will produce large but finite Cayley graphs centered around $\ket{e}$. While the true Cayley graph is vertex-transitive and looks everywhere the same, this ``translation invariance" breaks down on a finite graph and so we calculate all quantities with respect to a central node $\ket{e}$ far away from the boundary to mimize finite-size effects \footnote{Another subtlety is related to a quenched versus annealed random matrix average. The (quenched) random matrix average of the fraction $\text{Tr}(h_{i_1}\cdots h_{i_k})/\text{Tr}(h_{j_1}\cdots h_{j_l})$ may not be equal to the (annealed) ratio of the averages which is needed to establish the finite temperature correspondence. However, in the mathematical literature it is indeed shown that mixed moments in such a model are concentrated around their average as $q\rightarrow \infty$ \cite{} and so the correspondence at infinite $q$ can be extended to finite temperature. }. 

\begin{figure}
    \centering
    \includegraphics[width=0.9\linewidth]{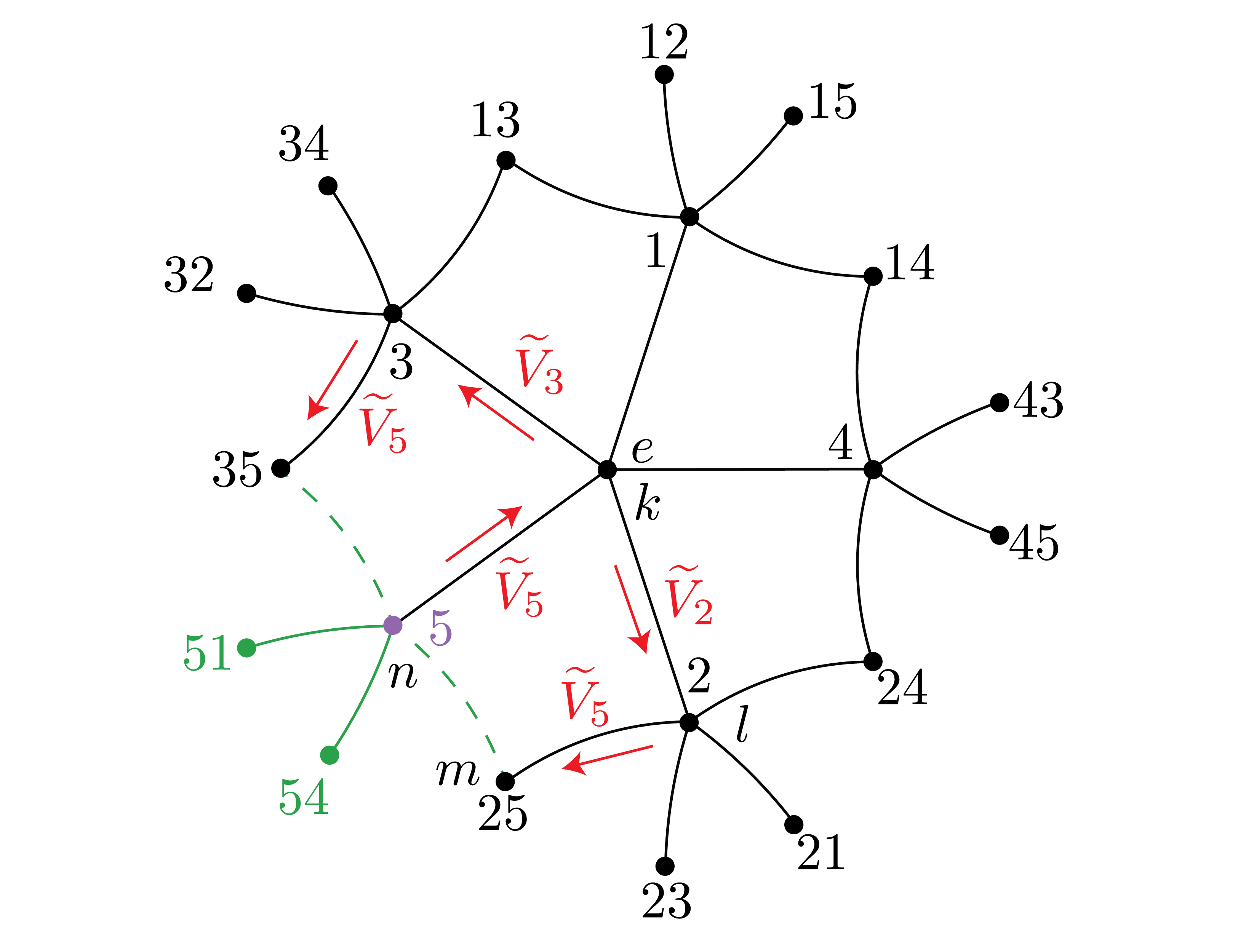}
    \caption{The recursive algorithm producing short-lex geodesic group words and adjacency matrices $V_i$ for the group $G_5$ corresponding to an $L=5$ HI RMT chain. Here we are performing the recursive step on the purple node, the group word $g_5\equiv 5$, and all black nodes and connections have been created. We need to put unit entries into $V_1,V_2,V_3,V_4$ in the correct locations. The green nodes and solid green connections are easily made, as they correspond to new group words generated by the word acceptor $W$. The dashed green connections are more subtle; we need to connect node $n$ to $m$, for example. Here, $n,m,k,l$ are indices for the location of the corresponding group word in the ordered set $\mathcal{L}$, not shorthand for generators. Starting on node $g_5$, $W$ will not allow a transition to $g_5g_2$ since $g_2g_5$ is already an element of $\mathcal{L}$ which represents that group element. Since the transition is rejected by $W$, we know it must be part of a cycle of length $4$ and we can use the previously constructed adjacency matrices to identify the index $m$ that needs to be connected to $n$.}
    \label{fig:algo}
\end{figure}

\subsection{Numerical construction of adjacency matrices}

Our main technical result is an algorithm to numerically construct the matrix representations $V_i$ on large but finite clusters of the Cayley graph of $G_L$ by taking advantage of the group's automatic structure. Fig.~\ref{fig:algo} demonstrates the working principle of the algorithm for $L=5$. The algorithm is recursive and the recursion step is as follows. We are sitting at a node at a radius $d$ from the node $e$. It corresponds to an element $g\in G_L$, which in turn corresponds to the $n$th element of the ordered set $\mathcal{L}$. In the example, this is the purple node corresponding to the group word $g_5$, written $5$ for short. In that example, this is element $6$ in $\mathcal{L}$.

The first task is to enumerate new elements of $\mathcal{L}$, i.e. children of $g$ which are at radius $d+1$ from the origin, and assign to each child an index in the short-lex ordering. Those are the words $gg_i$ which are accepted by the word acceptor. In the above example, these are the words $51$ and $54$. The number of accepted generators depends locally on $g$. Another crucial step is to record the state of the word acceptor automaton at the creation of each new word. In this way, when we recursively end up at the new word, we will only need to perform one additional transition.

The second task is to record all connections $(n,m)$ from the current node $n$ to adjacent nodes $m$ of radius $d+1$ into the adjacency matrices. These are the green lines in the example. If the connection is made via generator $g_i$, that entry goes into $V_i$. Some connections are made by the automaton (solid green), of the form $(n,n+1),(n,n+2),\dots$. However, some connections (dashed green) are not made by the automaton, since $m$ may be a child of a previous word, e.g. $25$ is a child of $2$, not $5$. These correspond to loops in the Cayley graph, so we know that there is a length $3$ path taking us to the desired node in $\mathcal{L}$.

To make the connections $(n,m)$ in that instance, we use the adjacency matrices so far constructed, say $\widetilde{V}_i$. They will already contain all necessary connections due to the short-lex ordering. Suppose the word acceptor rejects $2$ as in the example. Since the starting node was $g=5$, we know that we will need to move along the walk $525$ in order to extract the index $m$ of the node that should be connected to $n$. This is accomplished by using the corresponding matrix representations of $g_i$ so far constructed:
\begin{equation}
    m \xleftarrow[]{\widetilde{V}_5} l \xleftarrow[]{\widetilde{V}_2} k \xleftarrow[]{\widetilde{V}_5} n
\end{equation}
and we may then enter $(n,m)$ into $V_2$ if it doesn't yet exist \footnote{In practice, we only construct the upper triangular part of all $V_i$, but then since the graph is undirected, we add the transpose at the end.}. This summarizes our method for construction of the adjacency matrices $V_i$. Our Python code to produce these matrices is contained in the GitLab project \cite{adjacency_racg_2025}.

\begin{figure*}
    \centering
    \includegraphics[width=0.95\linewidth]{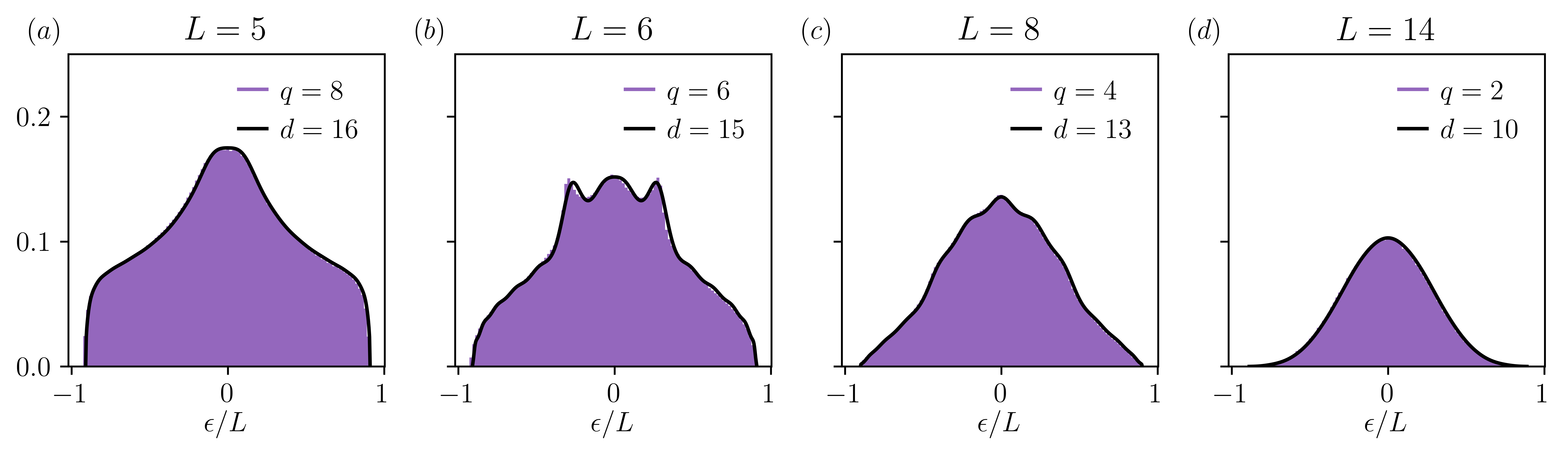}
    \caption{Many-body density of states $\mathcal{D}(\epsilon)$ for local HI RMT spin chains of length $L$ with PBC. Black curves are obtained from the single-particle hopping problem on the corresponding Cayley graph up to radius $d$. Purple histograms are obtained from the numerically exact eigenvalues of a single instance of the random matrix spin chain with local Hilbert space dimension $q$. We show both the largest $q$ and $d$ we were able to access by the corresponding algorithms.}
    \label{fig:density_of_states}
\end{figure*}

\section{Calculations}\label{sec:calculations}

\subsection{Density of states}
\label{sec:dos}

To compute the (local) density of states on the Cayley graph, we first obtain the local Green's function
\begin{equation}
    G(z) = \braket{e|(z-\Delta)^{-1}|e}.
\end{equation}
To approximate it, we employ the Lanczos tridiagonalization method. This method iteratively constructs a basis $|n)$ starting with $|1) = \ket{e}$ in which the Hamiltonian $\Delta$ is tridiagonal, having the form
\begin{equation}
    \Delta = \sum_{n=1}^\infty \sqrt{b_n} \bigg(|n)(n+1| + |n+1)(n| \bigg).
\end{equation}
The \emph{Lanczos coefficients} $b_n$ are the output of the algorithm, and are used to provide a better and better approximation to the density of states. As the iterative algorithm proceeds, the coefficients settle down to some value $b_\infty$. That they can approximate the local Green's function follows because for the above tridiagonal Hamiltonian, $G(z)$ takes on the form of an infinite continued fraction~\cite{haydock_1972},
\begin{equation}
    G(z) = \ddfrac{1}{z-\ddfrac{b_1}{z-\ddfrac{b_2}{z- \cdots}}} .
\end{equation}
For a tree with coordination number $m$, the algorithm becomes exact since the coefficients settle down after one iteration: $b_1 = m$ and $b_{n\geq 2} = m-1$ \cite{mahan_energy_2001}. The remaining continued fraction expansion takes on a closed form and the Green's function becomes
\begin{equation}
    G_{\mathrm{tree}}(z) = \ddfrac{1}{z-\ddfrac{2m}{z+\sqrt{z^2-4(m-1)}}} ,
\end{equation}
which can be seen to be equivalent to the standard expression for a tree~\footnote{The standard expression is
\begin{equation}
    \frac{2(m-1)}{(m-2)z-m \sqrt{z^2-4(m-1)}}.
\end{equation}}
Away from the tree limit, one needs to terminate the algorithm after a certain number of steps. The number of steps is limited by the size of the cluster constructed, and one can compute $d$ coefficients from the cluster of radius $d$. Using the largest accessible value of $b_d$ and assuming $b_{n\geq d}=b_d$ one terminates the continued fraction by replacing the rest with a similar closed form expression~\cite{haydock_1972}. This procedure is described in~\cite{mosseri_2023} as embedding the cluster into an effective tree-like medium to suppresses finite-size effects associated with the large fraction of sites at the boundary. 

Given the approximation to $G(z)$ we calculate
\begin{equation}
    \mathcal{D}(\epsilon) = \bigg | \frac{1}{\pi}\  \text{Im}\ G(\epsilon + i0^+) \bigg |
\end{equation}
which corresponds to the \emph{average} RMT many-body density of states of the original spin chain via the RMT/single-particle duality. Results are shown in Fig.~\ref{fig:density_of_states} compared against exact diagonalization of a single realization of the corresponding spin chain. The purple histograms measure a coarse-grained or smeared density of states via
\begin{equation}
    \mathcal{D}(\epsilon) = q^{-L}\sum_{n} \delta(\epsilon-E_n)
\end{equation}
where $E_n$ are the many-body eigenenergies of a single random realization of $H$ and $\delta(x)$ is a small window function centered at $x$, e.g. a rectangular window. It is clear from this numerical result that the density of states is a sufficiently coarse gained quantity that it is self averaging, i.e. a single realization is already the same as the average. 

\begin{figure}[b]
    \centering
    \includegraphics[width=0.6\linewidth]{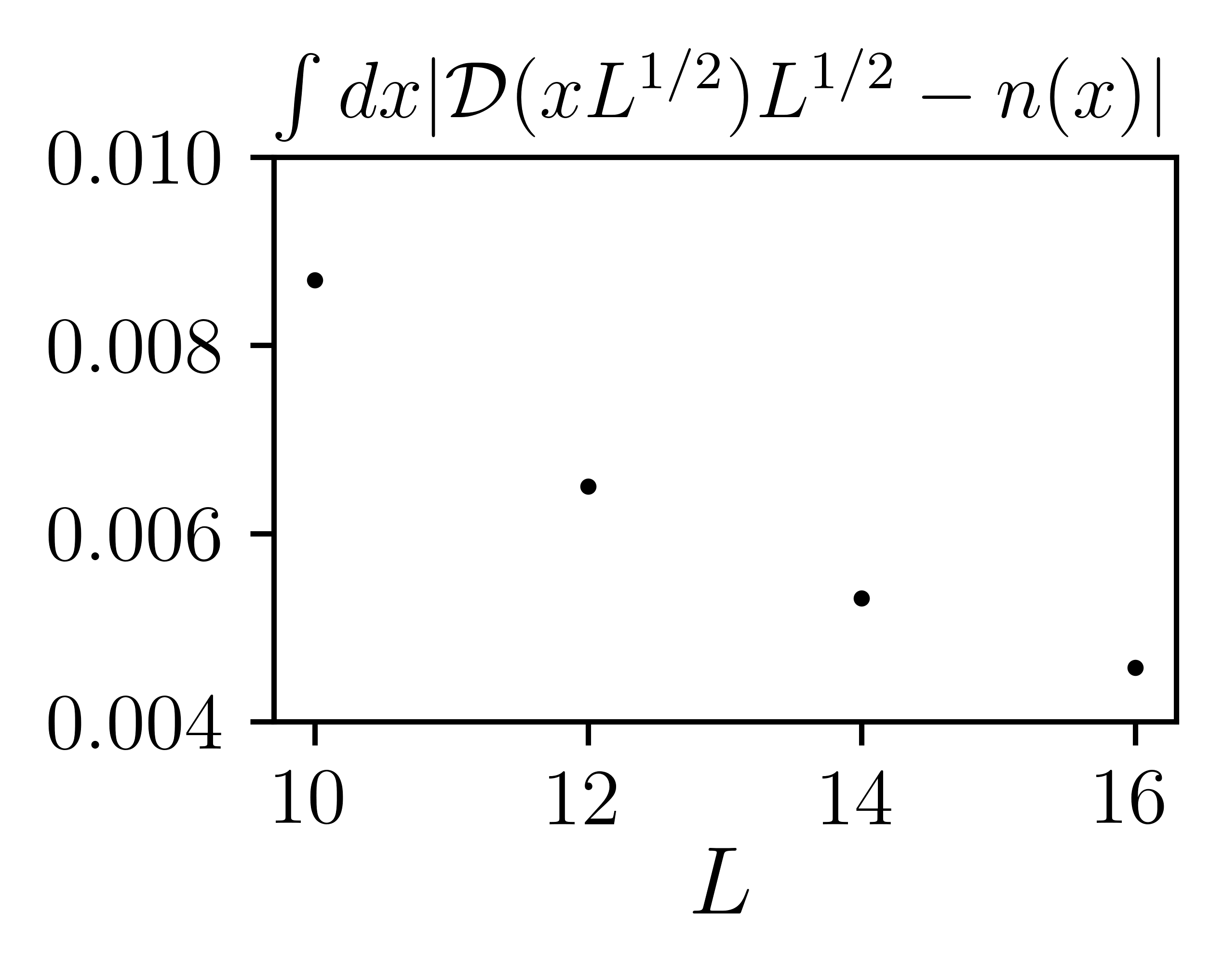}
    \caption{The properly scaled density of states obtained from the group word hopping problem approaches a normal distribution as $L$ grows.}
    \label{fig:compare_gaussian}
\end{figure}

An important empirical observation is that the density of states evolves towards a Gaussian as $L\rightarrow \infty$. While this is the physical result one would anticipate for a locally interacting many-body system, it obtained here in spite of first having taken $q\rightarrow\infty$. For any finite $L$, the $q\rightarrow\infty$ limit leads to a sharp band edge with no eigenvalues outside. Besides, the maximum possible eigenvalue of $H$ for any $q$ is strictly upper bounded by $L$ because the bonds have eigenvalues $\pm 1$. However, as $L$ is increased, most of the eigenvalues concentrate within $|\epsilon| \sim \sqrt{L}$ and the spectrum develops approximately Gaussian tails such that a large eigenvalue $\sim aL$ is exceedingly unlikely. To check the emergence of the Gaussian as $L$ grows, we can approximately compute the total deviation of the properly scaled density of states from the standard normal $n(x) = (2\pi)^{-1/2}e^{-x^2/2}$. Fig.~\ref{fig:compare_gaussian} shows that this error is decreasing with $L$. This is completely distinct from the semicircular law of width $\sqrt{L}$ which would be approached if we considered the sum of $L$ mutually free variables as $L$ grows~\cite{nica_lectures_2006}.

The $L=5$ spin chain displays a particularly fast convergence of the sequence $b_n$, while the convergence is slower for $L=6$. The $L=4$ result is shown in Appendix~\ref{sec:dos_details}, which has particularly poor convergence properties due to a sharp peak (likely a singularity) at $\epsilon=0$ which were also noted in Ref.~\cite{pollock_2025}. Slow convergence of $b_n$ is known to occur in the presence of nonanalyticities~\cite{mosseri_2023}. However for $L\geq 5$, there do not appear to be any nonanalyticities in the bulk of the spectrum. We remark that as $L$ grows, we can get away with iterating the Lanczos algorithm fewer times, as the resulting error becomes decreasingly important on a scale growing with $L$.

For the interested reader, in Appendix~\ref{sec:dos_details} we collect data for the moments $\braket{H^n}$, which are the number of walks returning to the origin and also called the co-growth series. The number of points on the Cayley graph within a radius $d$ of $e$, also called the growth function or volume, $V(d)$, and all of the collected data for the Lanczos coefficients $b_n$ are also collected for reference. We note that explicit analytical computation of the above quantities, especially the co-growth, has been so far elusive in the literature on such right-angled Coxeter groups, but relatively recent work has discussed the asymptotic behavior of the growth series \cite{kolpakov2020}, and Appendix~\ref{sec:dos_details} may be of interest to researchers in that area.

\subsection{Two-point functions}

\begin{figure}
    \centering
    \includegraphics[width=0.9\linewidth]{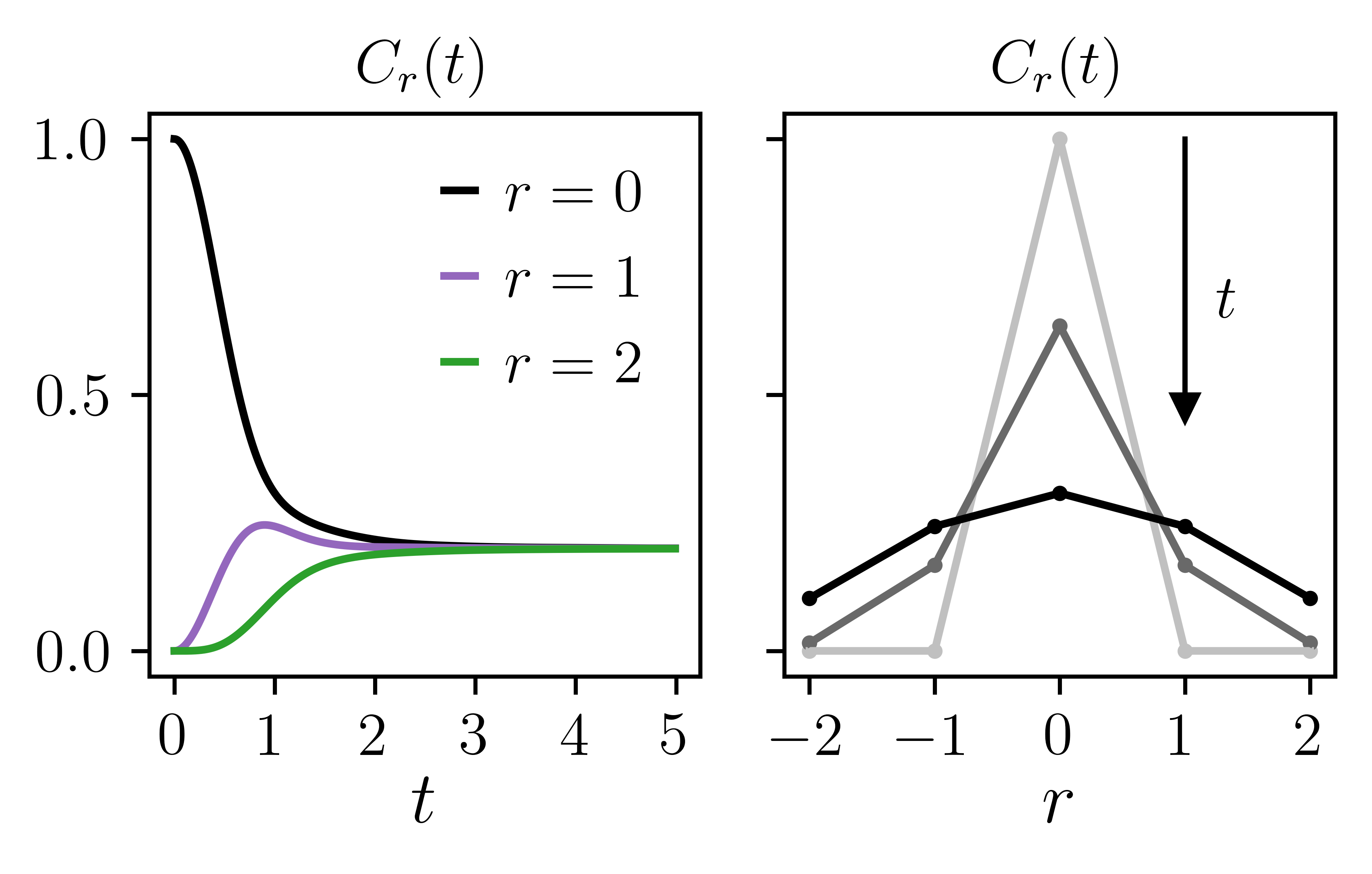}
    \caption{Dynamics of the local energy density in the infinite-$q$ and $L=5$ local HI RMT chain. $C_r(t)$ is the response at distance $r$ and time $t$ following a local excitation on one bond. Data is obtained numerically via single-particle hopping dynamics on the Cayley graph of $G_5$ including points up to radius $d=16$.}
    \label{fig:two_point}
\end{figure}

The next use of the correspondence is to calculate correlators of the local energy density. In particular, we consider two point functions of the local energy density
\begin{equation}
    C_r(t) = \braket{h_i(t)h_j(0)} \quad |i-j|=r
\end{equation}
which are, physically, proportional to the high temperature response at a distance $r$ and time $t$ to an energy impulse on some bond~\footnote{For the HI model with $h_i^2=1$, these are not only the linear response, but the full non-perturbative response.}. Here the average two-point functions depend only on the separation $r$ since the system is translation invariant on average.

For these calculations, we simply directly use finite-size approximations to the $V_i$ including points up to radius $d$, and recognize that after a certain time-scale the results will be invalid, which can be made self-consistent by comparing the calculation for say, radius $d-1$, $d$, and $d+1$, and ignoring the result after they differ by some small value. Within this approach, we are limited to smaller $L$, as it becomes impractical to calculate the deep adjacency matrices for the large $L$ graphs with many generators. Since the $L=5$ case has particularly nice convergence properties, we focus on this case, but we show more available results for $L=4,6$ two-point functions in Appendix~\ref{sec:dos_details}.

While the physics behind these two point functions is relatively simple, they are highly non-trivial from a random-matrix point of view. Should one have considered an all-to-all $H$, such spatially extended correlation functions would not be meaningful. It is interesting to see that the results obtained from the mapping to a free particle are indeed physical. The energy impulse thermalizes on an $O(1)$ timescale as $q\rightarrow\infty$ despite the fact that the Heisenberg time is formally infinite. It also takes time for energy to spread through the chain, for example distant ($r\geq 1$) correlators are initially small since the energy density hasn't reached further bonds. The latter can be made more precise, as follows.

In Fig.~\ref{fig:lieb_robinson}, we show maximally distant two-point functions at early time for various system sizes $L$. We are able to access larger $L$ here since we are only interested in early times, and, consequently, smaller Cayley graphs suffice for accurate results. One can see an exponential decrease of correlatons with $L$ for any fixed $t$. This is simply the statement that there exists an emergent causality, i.e. that any signal at large separation and short time should be exponentially suppressed. This is already a signature of locality of interactions. One can also see a power-law increase with time that is consistent with perturbation theory. The two-point function can be expanded as
\begin{equation}
    C_r(t) = \sum_n \frac{(it)^n}{n!} \sum_{i_1 \cdots i_n} \braket{[h_{i_1}, \cdots,[h_{i_n},h_{r+1}]\cdots] h_1}.
\end{equation}
The leading behavior at early times will be given by $t^n$ for the smallest $n$ such that the both the commutator and the RMT correlator are non-vanishing. For example, supposing $r=3$, a shortest-length word satisfying the above conditions is
\begin{equation}
    \braket{h_1h_2h_2h_4h_3h_3h_4h_1}
\end{equation}
In general we need $n=2r$, leading to the early-time dependence $\sim t^L$ for $r=L/2$ seen in Fig.~\ref{fig:lieb_robinson}.

\begin{figure}
    \centering
    \includegraphics[width=0.62\linewidth]{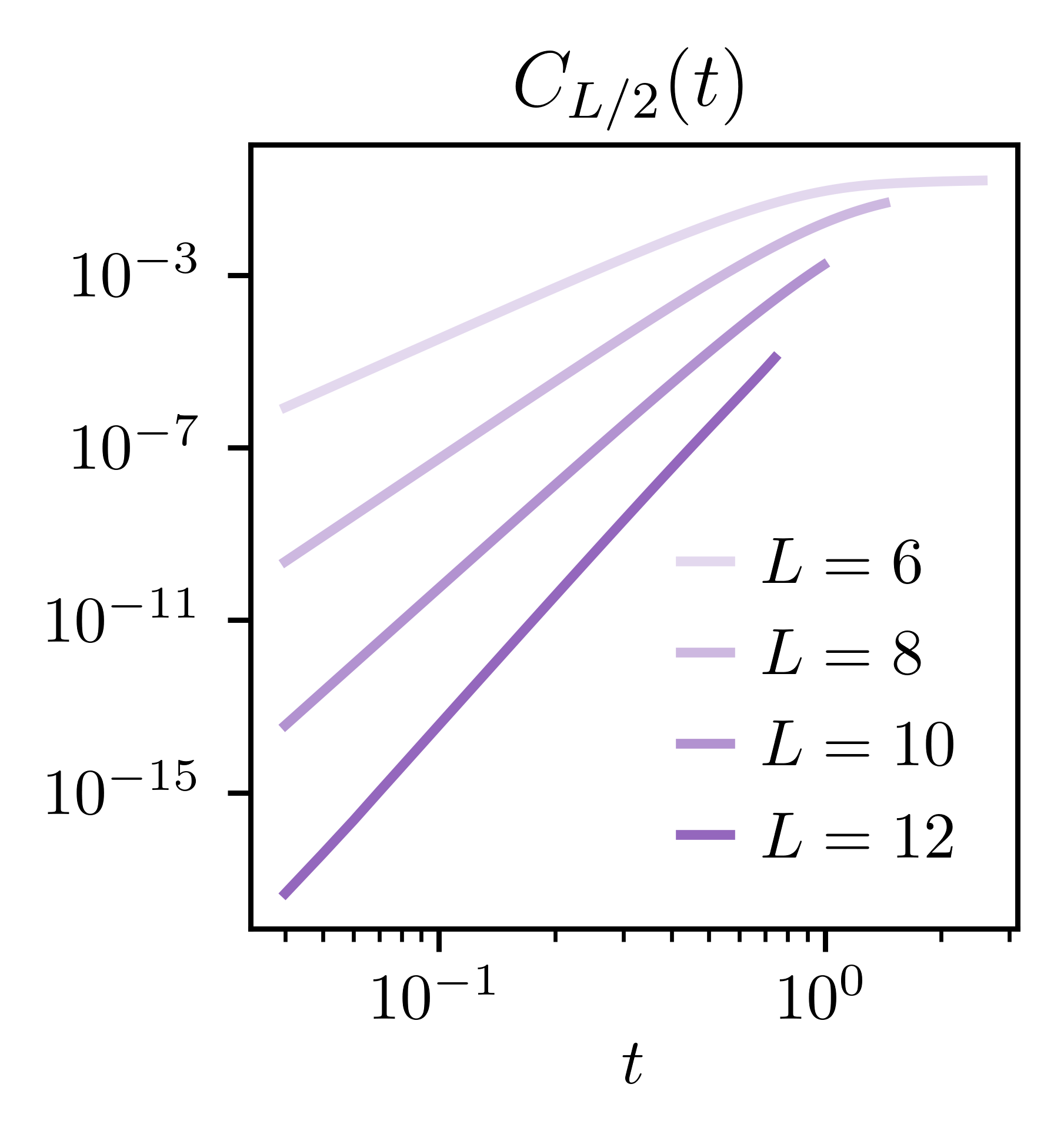}
    \caption{Infinite $q$ early time behavior of maximally distant two-point functions ($r=L/2$) for each system size $L$, computed via the group word hopping. More distant correlators take longer to receive a significant signal due to the locality of interaction and the early time dependence is $\sim t^L$.}
    \label{fig:lieb_robinson}
\end{figure}

Fig.~\ref{fig:two_point} panel (a) shows the spacetime dependence of the energy density computed using the Cayley graph of $G_5$ up to radius $d=16$. On the plotted timescale, $d=15,16$ were indistinguishable confirming the validity of the truncation. It is interesting to see that the late-time value of $C_r(t) \rightarrow L^{-1}$, without further finite-size corrections. The fact that the long-time value is precisely $L^{-1}$ for small system, including for $L=2,3$ (which we observed in Ref.~\cite{pollock_2025}) without further corrections, is an interesting result. Unfortunately, the previous and present methods confirm only that the $L=2,3,4,5$ two-point functions equilibrate to $L^{-1}$, since $L\geq 6$ calculations become impractical to obtain large enough adjacency matrices to achieve sufficiently late times. Nonetheless, we conjecture that this holds for any $L$.

\subsection{Typicality and Mazur bounds}

\begin{figure}
    \centering
    \includegraphics[width=1\linewidth]{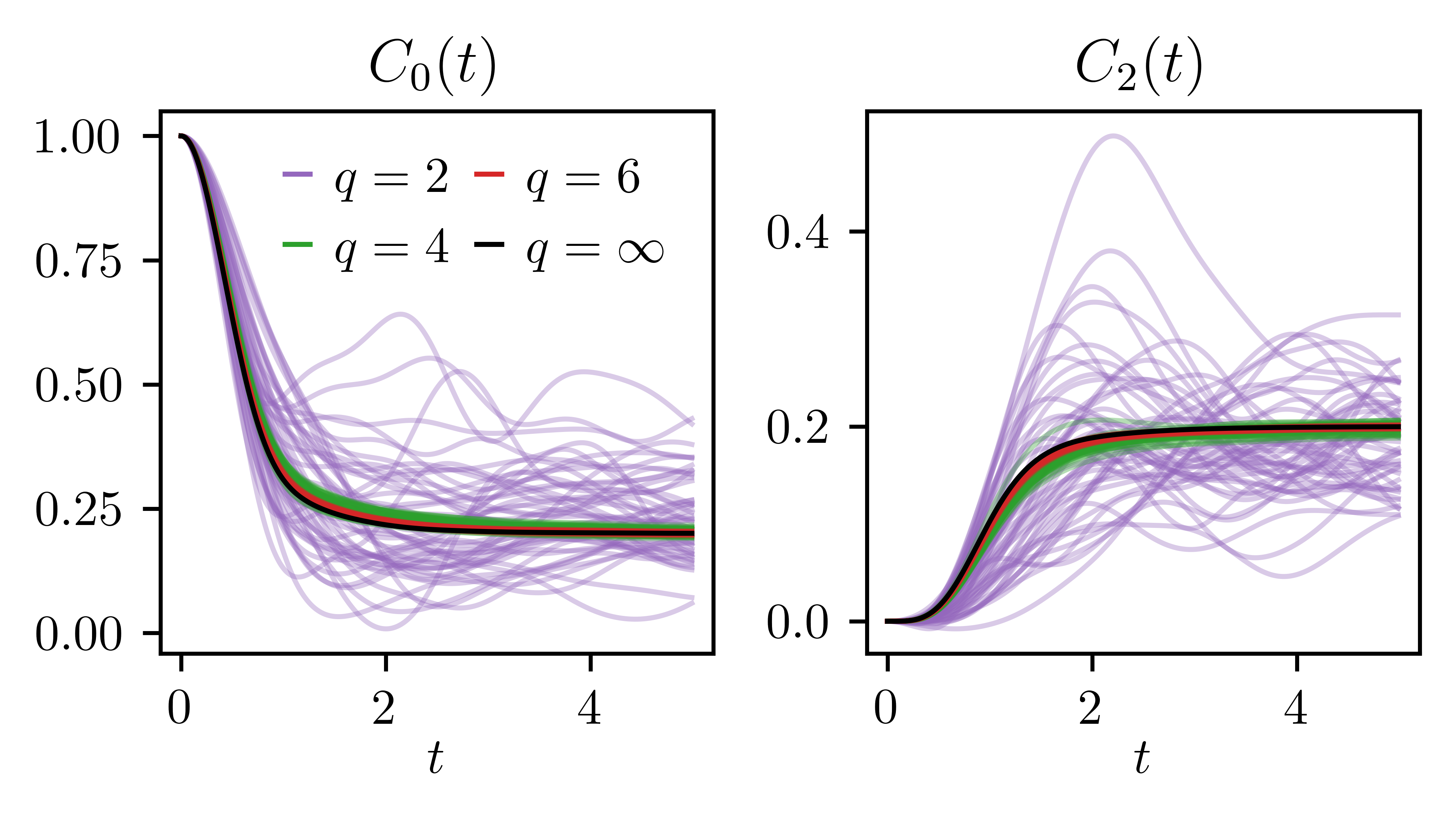}
    \caption{Examining the typicality of two-point functions for $50$ different instances of the $L=5$ HI RMT chain. Each line is a unique realization for $q=2,4,6$ in purple, green, and red, respectively. These are compared against the $q\rightarrow\infty$ result in black. The $q=2$ chain is clearly not self-averaging, while already $q=6$ is highly self-averaging.}
    \label{fig:typicality}
\end{figure}

In Sec.~\ref{sec:dos} where we discussed density of states, it was empirically clear that such a quantity is self-averaging. Here, we ask if the two-point functions are also self-averaging as $q\rightarrow\infty$ for finite $t$ and $L$. We again focus on $L=5$ since this is the most nontrivial model we can still access at long times with the present methods. In Fig.~\ref{fig:typicality} we compare $50$ different realizations of the $L=5$ HI RMT chain for small $q$ to the $q=\infty$ average result. One can see that $q=2$ is not self averaging. However, if we increase to $q=4$ it is clear that already the $q\to\infty$ physics is relevant. For $q=6$, there is not a single outlying curve among the $50$ samples.

Let us digress on the typicality of the equilibrium value, in particular for the autocorrelator $C_0(t)$. We have seen in the cases studied thus far with $L\leq 5$ that this approaches $L^{-1}$. That value occurs in general for a translation-invariant system whose only integrals of motion that couple to energy density are powers of the Hamiltonian $H^k$. This can be seen from Mazur bounds on the autocorrelation function, which we discuss in Appendix~\ref{sec:mazur}. In the HI RMT chain, there are some subtleties related to disorder averaging. We provide some steps towards a proof that the long-time value of $C_0(t)$ is $L^{-1}$ by showing that, if certain correlations are unimportant in the limit $q\rightarrow\infty$ and if the only integrals of motion coupling to energy density are powers of $H$, we obtain the Mazur ``equality" $C_0(t) = L^{-1}$. The equality means that the bound is saturated since we have putatively accounted for all charges. We conjecture that for the group $G_L$ and some element $Q$ of the group algebra,
\begin{equation}
    [Q,\Delta] = 0 \implies Q \in \text{alg}(e,\Delta)
\end{equation}
where $\text{alg}(e,\Delta)$ is the algebra generated by $e,\Delta$ and that some similar statement holds in the original spin chain with high probability as $q\rightarrow\infty$.

\section{From chaos to integrability}\label{sec:integrable}

In the previous sections, we discussed the HI random-matrix spin chain, which is quantum chaotic in the sense of having Wigner-Dyson energy level spacing statistics in both the large $q$/small $L$ and small $L$/large $q$ limits~\cite{pollock_2025}. We also conjectured that the only local conservation law was total energy in the large $q$ limit. The emergent group word dynamics for that model occurs on an infinite right-angled Coxeter group. Since the only relations in such a group are that distant generators commute, for any given $L$ this group is the ``biggest" we can have subject to $g_i^2=e$ and the geometry of the spin chain. The key property is that neighboring generators are free, and so
\begin{equation}
    (g_i g_{i+1})^\infty = e ,
\end{equation}
that is, no finite length alternating group word of the above form can ever return to the identity in the group. For the above reasons, we refer to the large $q$ model as chaotic. What happens if we interrupt this freeness and make the group ``smaller," such that the neighboring generators obey
\begin{equation}
    (g_i g_{i+1})^k = e
\end{equation}
for some finite $k$? We might expect that the dynamics of the system will be more special and less generic or less chaotic.

In this section, we discuss in some detail what would happen if we set $k=3$. This corresponds to putting \emph{braids} (i.e., Yang-Baxter relations) into the group. Upon constructing two more nontrivial local conserved charges beyond energy, we conjecture that adding braids is sufficient to render the chain integrable at the level of the group algebra. At the end of the section, we briefly touch upon setting $k=4$ and leave the rest for future work. We will compare these various cases at the level of an abstract group which may or may not admit a ``matrix model", however, we do point out a deterministic translation invariant matrix model when we include braids and an additional ``long" relation in the group which yields the symmetric group. Besides this case, we will refer generally only to the group generators $g_i$ and not actual matrices $h_i$ acting on some spin chain. The groups we study in this section are all translation invariant in the sense that we can rotate all generators $g_i \rightarrow g_{i+a}$ and obtain the same group. This is also true for the group $G_L$ corresponding to the RMT chain, which was not strictly translation invariant as a spin chain.

\begin{figure}
    \centering
    \includegraphics[width=\linewidth]{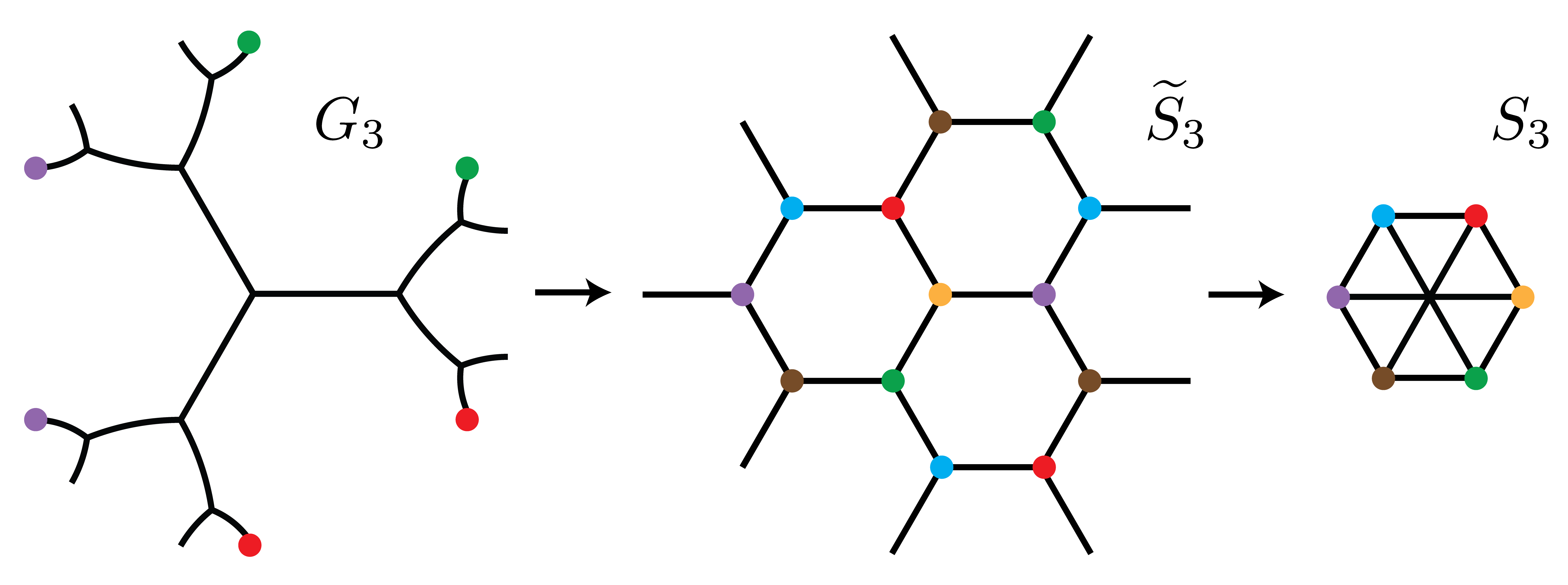}
    \caption{How the Cayley graph changes as we go from the right-angled Coxeter group down to the symmetric group. The graph of $G_3$ is an infinte tree, the graph of $\widetilde{S}_3$ is an infinite honeycomb lattice, and the graph of $S_3$ a small finite graph with $3!=6$ nodes. Colored circles represent points that are identified as the appropriate relations are added. Note that the symmetric group with the given presentation does not form a planar Cayley graph and the central point where all lines cross is not a group element.}
    \label{fig:collapsing_3}
\end{figure}

\subsection{Braids and the symmetric group}

\begin{figure*}
    \centering
    \includegraphics[width=0.85\linewidth]{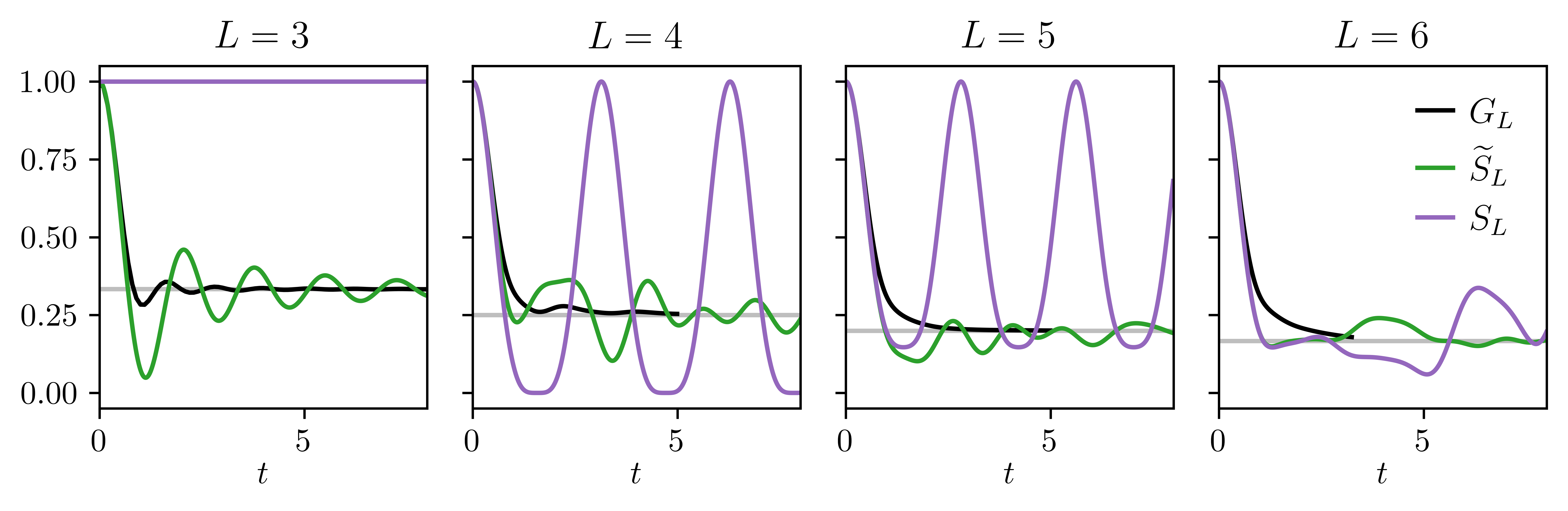}
    \caption{How autocorrelation functions $C_0(t)$ change as we add braid relations and the ``long" relation Eq.~\eqref{eq:long} into the group. The dynamics of $G_L$ (black curves) is the most ``generic" in the sense that the correlators quickly reach $L^{-1}$ (shown in light gray horizontal lines). Adding the braids brings one to the group $\widetilde{S}_L$ which displays some strong oscillations, while going all the way to the symmetric group leads to some interesting revivals for $L=4,5$, though not in general.}
    \label{fig:modify_group}
\end{figure*}

Adding braid relations into the group $G_L$ gives the affine symmetric group $\widetilde{S}_L$, which has the presentation
\begin{equation}
    \widetilde{S}_L = \braket{\ g_j\  |\ g_j^2,\ (g_i g_{i+1})^3,\  (g_i g_j)^2\ \text{if}\ |i-j|>1\ } 
\end{equation}
with $L$ generators. This is still an infinite group, but its word growth is no longer exponential. In fact, the geometry of the Cayley graph with respect to this presentation is Euclidean for any $L$~\cite{lewis_affine_2021}.
Just looking at this presentation, we see it is almost the symmetric group $S_L$, as generated by transpositions of adjacent symbols
\begin{equation}
    g_i = (i, i+1) ,
\end{equation}
which braid, square to $e$, and commute if far enough apart. However, there is no rule giving that, e.g. in $L=3$, $g_1 g_2 g_1 \sim (12)(23)(12) = (13) \sim g_3$. As a result, the group is infinite, whereas the symmetric group $S_L$ is finite. $\widetilde{S}_L$ acts on the Cayley graph in the same way as discussed in Sec.~\ref{sec:tools}: draw perpendicular hyperplanes to the edges and reflect. These reflections can be used to define $\tilde{S}_L$ as the reflection symmetries of an $L-1$-dimensional hyperplane in $L$-dimensional affine Euclidean space.

To go all the way down to the symmetric group, we need one more ``long" relation to impose this condition,
\begin{equation}\label{eq:long}
    g_L = g_1 g_2 \cdots g_{L-2} g_{L-1} g_{L-2} \cdots g_2 g_1.
\end{equation}
We note that adding this relation makes the presentation non-Coxeter, but the automatic structure remains. Upon adding this relation, we can now understand the group word dynamics in $S_L$ as the energy dynamics of a PBC large-$q$ chain with SWAP interactions:
\begin{equation}\label{eq:swap}
    H = \sum_{i=1}^L h_i,\quad h_i = \sum_{a,b=1}^{q} \ket{ba} \bra{ab} \otimes I_{\overline{i, i+1}}.
\end{equation}
This follows because, much like the HI chain, we have that in the $q\rightarrow\infty$ limit
\begin{equation}
    \braket{h_{i_1} \cdots h_{i_k}} = 
    \begin{cases}
        1\quad \mathrm{if}\ \pi = e\\
        0 \quad \mathrm{otherwise}
    \end{cases}
\end{equation}
where $\pi = g_{i_1}\cdots g_{i_k}\in S_L$ is the permutation obtained by composing all of the SWAPs in the group word (the action can be read from the right or from the left) \footnote{At finite $q$, we have
\begin{equation}
    \braket{h_{i_1} \cdots h_{i_k}} = q^{\#(\pi) - L}
\end{equation}
with $\#(\pi)$ the number of cycles in the permutation $\pi$ (including trivial cycles).}. We can therefore make all the same identifications Eqs.~\eqref{eq:correspondences} as we did in the RMT case; this model still admits a single-particle group word dynamics, but now the group is finite.

How do the Cayley graphs and the energy dynamics vary as we go through the sequence
\begin{equation}\label{eq:sequence}
    G_L \longrightarrow \widetilde{S}_L \longrightarrow S_L?
\end{equation}
The easiest case to understand is $L=3$, since the Cayley graphs of both $G_3$ and $\widetilde{S}_3$ are planar, and one can clearly see how the geometry changes: see Fig.~\ref{fig:collapsing_3}. By adding the braids, which contain $6$ generators, we change from the hyperbolic tiling $\{\infty,3\}$ to a Euclidean one, $\{6,3\}$---the honeycomb lattice. The group is still infinite, but is in a sense smaller: there are fewer words of each minimum length greater than 2. Going all the way to the symmetric group, we get a finite Cayley graph. Let us now look at how the energy dynamics following an excitation on some bond differs in these different groups.

We focus on autocorrelators $C_0(t)$ for simplicity. In this section, we mean the abstract correlator
\begin{equation}
    C_0(t) = \braket{g_i(t) g_i(0)}_e ,
\end{equation}
where $\braket{\cdot}_e$ means pick out the term proportional to $e$ in the group algebra, and Heisenberg time evolution refers to single-particle hopping \footnote{We have slightly abused notation in this section. Previously, the adjacency matrix $\Delta$ acted on the single-particle Hilbert space of the Cayley graph, whereas here $\Delta$ is as an abstract element of the group algebra,
\begin{equation}
    \Delta \equiv \sum_i g_i.
\end{equation} Equivalently, we can replace all $g_i$ appearing in the expressions with the matrix representations $V_i$, again acting on the single-particle Hilbert space of the Cayley graph and get the same result.}. Fig.~\ref{fig:modify_group} shows the autocorrelator for $L=3,4,5,6$ and the different groups. We can see in general that going through the sequence defined by Eq.~\eqref{eq:sequence} makes the dynamics more special. The most generic situation is when there is no relation at all among $g_i$ and $g_{i+1}$ due to the asymptotic freeness in the original RMT chain. Those curves quickly (compared to the other groups) thermalize to $L^{-1}$. However, upon adding braids, pronounced oscillations are observed. Going all the way to the symmetric group, i.e. the dynamics under the large $q$ SWAP chain Eq.~\eqref{eq:swap}, we can see that $L=3,4,5$ are exceptional.

For $L=3$, adding all of the relations renders the dynamics a bit too non-generic: no dynamics occurs at all. In fact, one can check that, in that case, $[g_i,\Delta]=0$ in $S_3$. For $L=4$, perfect constructive and destructive interference of energy density is observed at some periodic times $t_*$, for which $[g_i(t_*),g_i]=0$. One can see this by examining the OTOC, which is essentially the norm of the commutator. For $L=5$, the same holds for perfect revivals, but no perfect destructive interference is observed. For $L\geq 6$ no revivals are observed. As $L$ grows, the Cayley graph becomes complicated, so many frequencies contribute to the dynamics and full revivals are absent. However, we observed that $S_L$ for $L=6,7,8$ displays persistent but irregular fluctuations which do not decay, consistent with the group being finite.

\subsection{More local charges beyond energy}

In this section we focus on the group $\widetilde{S}_L$. Everything that follows will apply to $S_L$ as well, but braid relations (without interpreting the generators as SWAPs) and self-inverses are sufficient for the construction of higher translation-invariant charges. The braid relations
\begin{equation}
    (g_i g_{i+1})^3 = e
\end{equation}
are reminiscent of the Yang-Baxter equation, but in a simplified, parameterless setting. Yang-Baxter integrability implies an infinite set (for an infinite chain) of conserved local translation invariant conserved charges all mutually in involution and linearly independent~\cite{faddeev_how_1996}. We ask if something similar happens here.

One approach to generate these charges uses the so-called boost operator, which acts as a ladder operator for the charges~\cite{loebbert_2016}. While we do not intend to study this topic in detail, we do show that the boost operator can generate at least two more charges beyond the energy itself. We show this purely at the level of the group algebra (linear combinations of group elements) without reference to any particular spin chain. To generate higher charges, we start with the ``energy" or hopping adjacency matrix itself:
\begin{equation}
    Q_2 = \sum_{i=1}^L g_i.
\end{equation}
Arguably this should be called $Q_1$, since it involves radius one group words, but we start at $Q_2$ to be consistent with the literature.
The boost operator is defined by
\begin{equation}
    B = \sum_{j=1}^L j\ g_j.
\end{equation}
Let
\begin{equation}
    Q_3 = [B,Q_2] = \sum_{i=1}^L [g_i,g_{i+1}].
\end{equation}
We claim that for a sufficiently large $L$, this is both linearly independent from and in involution with $Q_2$. An important thing one must do when generating the charges in this way on a finite chain of length $L$ with PBC is to read the coefficients of $[B,Q_n]$ modulo $L$. Otherwise, some boundary terms will arise beyond what is written. We note that $Q_3$ is proportional to an energy current: $J= i Q_3$ and $J_j = i [g_j,g_{j+1}]$ is the current density operator~\cite{Bertini2021}. Therefore, braids alone are sufficient to produce a persistent energy current. It was recently proven in Ref.~\cite{hokkyo_integrability_2025} that, for a class of spin chains, existence of this persistent energy current is enough to guarantee the whole set of charges. Since we are operating at the level of an abstract group algebra, we cannot directly import this result.

We can, however, use the boost operator to generate another charge:
\begin{equation}
    Q_4 = [B,Q_3] - 2Q_2 = \sum_{i=1}^{L}\ [[g_i, g_{i+1}], g_{i+1}+2g_{i+2}]
\end{equation}
where we subtracted off a trivial term proportional to $Q_2$. This charge is in involution with both $Q_2$ and $Q_3$. One can verify that, in $\widetilde{S}_L$, say for $L\geq 5$,
\begin{equation}
    [Q_3,Q_2] = 0 \quad [Q_4,Q_2] = 0 \quad [Q_4,Q_3] = 0.
\end{equation}
% \begin{gather*}
%     [Q_3,Q_2] = 0 \quad \text{for}\ L\geq 3 \\
%     [Q_4,Q_2] = 0 \quad \text{for}\ L\geq 4 \\
%     [Q_4,Q_3] = 0 \quad \text{for}\ L\geq 5.
% \end{gather*}
Because the calculations become extremely tedious, we outline a proof in particular that $[Q_4,Q_3] = 0$ in Appendix~\ref{sec:charges_in_involution}. The other equalities are verified by a similar method. The method is essentially to make repeated use of the braiding, the fact that the generators square to $e$, and translation invariance. 

We can also see that the charges are linearly independent for sufficiently large $L$, since they involve increasingly long group words. Via repeated application of the boost operator, charge $Q_n$ will contain a component proportional to 
\cite{hokkyo_integrability_2025}
\begin{equation}\label{eq:nesty_boi}
    D_n = \sum_{i=1}^L [\dots[[g_i,g_{i+1}],g_{i+2}]\dots, g_{i+n-2}].
\end{equation}
Let us first consider the group $\widetilde{S}_L$. We note that $D_n$ is then a linear combination of $2^{n-2}\cdot L$ distinct group elements, provided that $n\leq L$. Therefore no terms cancel and $D_n$ is a genuine nonzero contribution to $Q_n$ for $n\leq L$. Furthermore, the group elements in this sum all have word radius $n-1$: they cannot be reduced by braid relations because only one of each generator appears in each term. One may then wonder if in the symmetric group $S_L$ anything is different. If fact, the ``long" relation Eq.~\eqref{eq:long} and its translates do not enter, so long as we consider words of radius at most $L-1$. Therefore in both groups, for $n\leq L$, the quantities $D_n$, and consequently $Q_n$, are linearly independent from $Q_{n-1}$, which contains at most radius $n-2$ words.

% Consider the $i=1$ part of this expression. Note that each term in the expansion of the $i=1$ nested commutator contains one of each letter $g_1,g_2,\dots g_{n-1}$, assuming that $n\leq L+1$. All terms therein will be linearly independent because the commutators will switch neighboring generators which fail to commute and so the terms are unique group elements. Then, provided that $n \leq L$, each other translated density in $D_n$ will involve new generators and so $D_n$ consists of a sum of linearly independent group elements of word radius $n-1$. They cannot be reduced to shorter words via braids since only one of each generator appears, proving linear independence from $Q_{n-1}$, which contains words of at most radius $n-2$.

We therefore conjecture that for increasing $L$, we can generate extensively many linearly independent local charges mutually in involution using the boost operator, i.e. that the system is integrable.

\begin{figure}
    \centering
    \includegraphics[width=\linewidth]{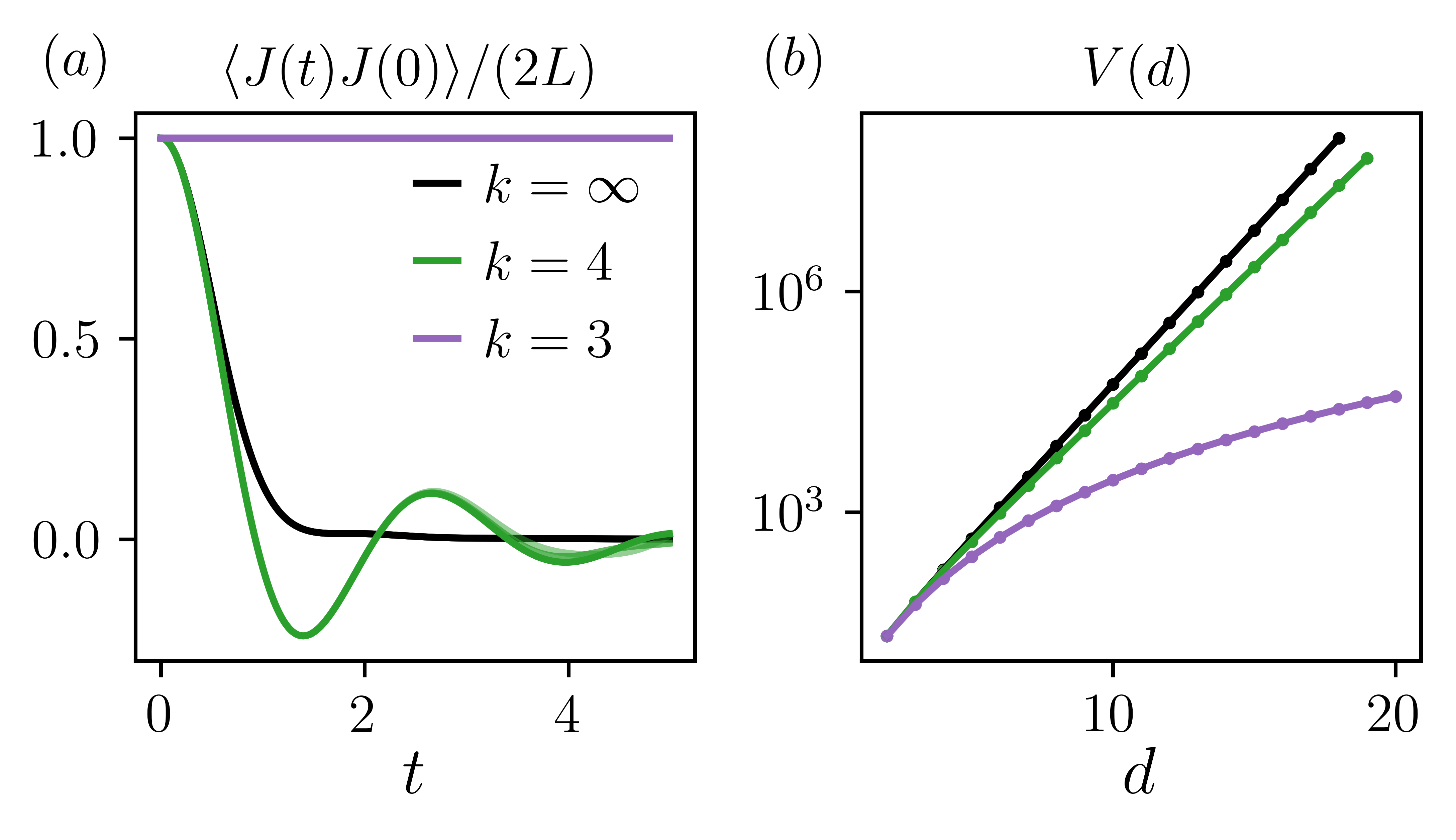}
    \caption{In (a), the energy current-current correlator for an $L=5$ PBC chain computed from the group word dynamics for three different infinite groups. The index $k$ refers to the relation between neighboring generators $(g_ig_{i+1})^k=e$ in the group. The three deviating green curves for $k=4$ are computed for Cayley graphs up to radius $d=11,12,13$ (shown in three shades of green) and so one can see that they start to deviate, but the general trend of oscillations is clear. In (b), the growth or volume of the group. While $k=3$ grows polynomially, both $k=4$ and $k=\infty$ grow exponentially.}
    \label{fig:energy_current}
\end{figure}

\subsection{Beyond braiding and the word growth}

Here we wrap up our discussion of modifying the relation between neighboring generators in the group. We ask if, like $k=3$, anything special can obtain upon setting $k=4$. At the outset, it seems unlikely that there will be a persistent energy current, since the braids were crucial to obtain $[Q_3,Q_2]=0$. One might hope, however, that this decays more slowly. To check this, we compare how a small injected energy current in an $L=5$ chain decays for $k=3,4,\infty$ in Fig.~\ref{fig:energy_current}. We see that while there is a perfectly conserved energy current in the $k=3$ instance, already at $k=4$ the energy current immediately decays in the same manner as $k=\infty$. However, unlike the free case, it does appear to oscillate around zero; the current sloshes back and forth before equilibrating (assuming it does equilibrate to zero). The amplitude also appears to decay more slowly than for $k=\infty$. As was a theme in the previous section, the dynamics is less ``generic" than the free case, but still almost certainly non-integrable.

It seems likely that there is a relationship between integrability and the ``size" of the group which determines the energy dynamics. This is captured by the scaling of the number of group elements within graph distance $d$ of $e$, also known as the ``volume" or ``growth function" $V(d)$.  For $k=\infty$, the group is a right-angled Coxeter group and so the volume or growth function $V(d)$ of the group is known to be strictly exponential (for $L\geq 3$). For the $L=5$ case for example, $V(d) \sim \varphi^{2d}$ as $d\rightarrow \infty$ with $\varphi$ the golden ratio (the exact sequence is entry A122678 in the Online Encyclopedia of Integer Sequences). More generally, the growth functions of right-angled Coxeter groups are known to be asymptotically $p^d$ with $p$ a Perron number \cite{kolpakov2020}. 

While the growth function of the affine Symmetric group $\widetilde{S}_L$ is polynomial, specifically $V(d) \sim d^{L-1}$, we can see in Fig.~\ref{fig:energy_current} that already the $k=4$ group (which is a Coxeter group but not a right-angled one) already has an exponential growth function. This suggests that the $k=3$ group is exceptional, and that an extensive number of local integrals of motion seem unlikely to exist for larger $k$. Nonetheless, one may wonder if smaller $k$ than infinity can host slower dynamics and if this is related to the scaling of $V(d)$, a question we leave for future work.

\section{Conclusion and outlook}\label{sec:conclude}

In this work we have continued our study, initiated in~\cite{pollock_2025}, of a certain random matrix ensemble which of $L$ locally interacting spins on a PBC chain. Using results from free probability theory, we have shown that any infinite temperature correlation function of only energy density operators in this random matrix model can be cast as a single-particle hopping process on an infinite regular graph in the limit of large local Hilbert space dimension. The graph is the Cayley graph of a right-angled Coxeter group. Tools from computational group theory allowed us to compute huge adjacency matrices on this graph and obtain approximate numerical solutions which revealed the physics of a generic Hamiltonian dynamics consistent with locality, including approach to a Gaussian density of states and spreading and thermalization of energy density.

The emergent group word dynamics that we established in the random matrix chain can then be extended to a more general setting. As an example, we studied the case of adding braid relations into the group, and saw that this implies more local conserved charges, including energy current. This led us to conjecture that the system is integrable at the level of the group algebra. Adding one more relation turns the group into the symmetric group generated by adjacent SWAPs, for which a microscopic Hamiltonian can be written which realizes the corresponding group word dynamics, again in the limit of large local Hilbert space dimension. In that case exact revivals of energy density can sometimes be observed.

As a final example, we discussed another Coxeter group with the relations $(g_i g_{i+1})^4 = e$, finding that this group has a slower decaying energy current than the RMT chain with relations $(g_i g_{i+1})^\infty = e$, but not a persistent energy current, as in the case with $(g_i g_{i+1})^3 = e$, suggesting that braids are special. It seems possible that one could generate a hierarchy of quantum chaotic dynamics by tuning $k$, similar in spirit to $k$-designs~\cite{roberts_chaos_2017}.

\begin{figure}
    \centering
    \includegraphics[width=0.6\linewidth]{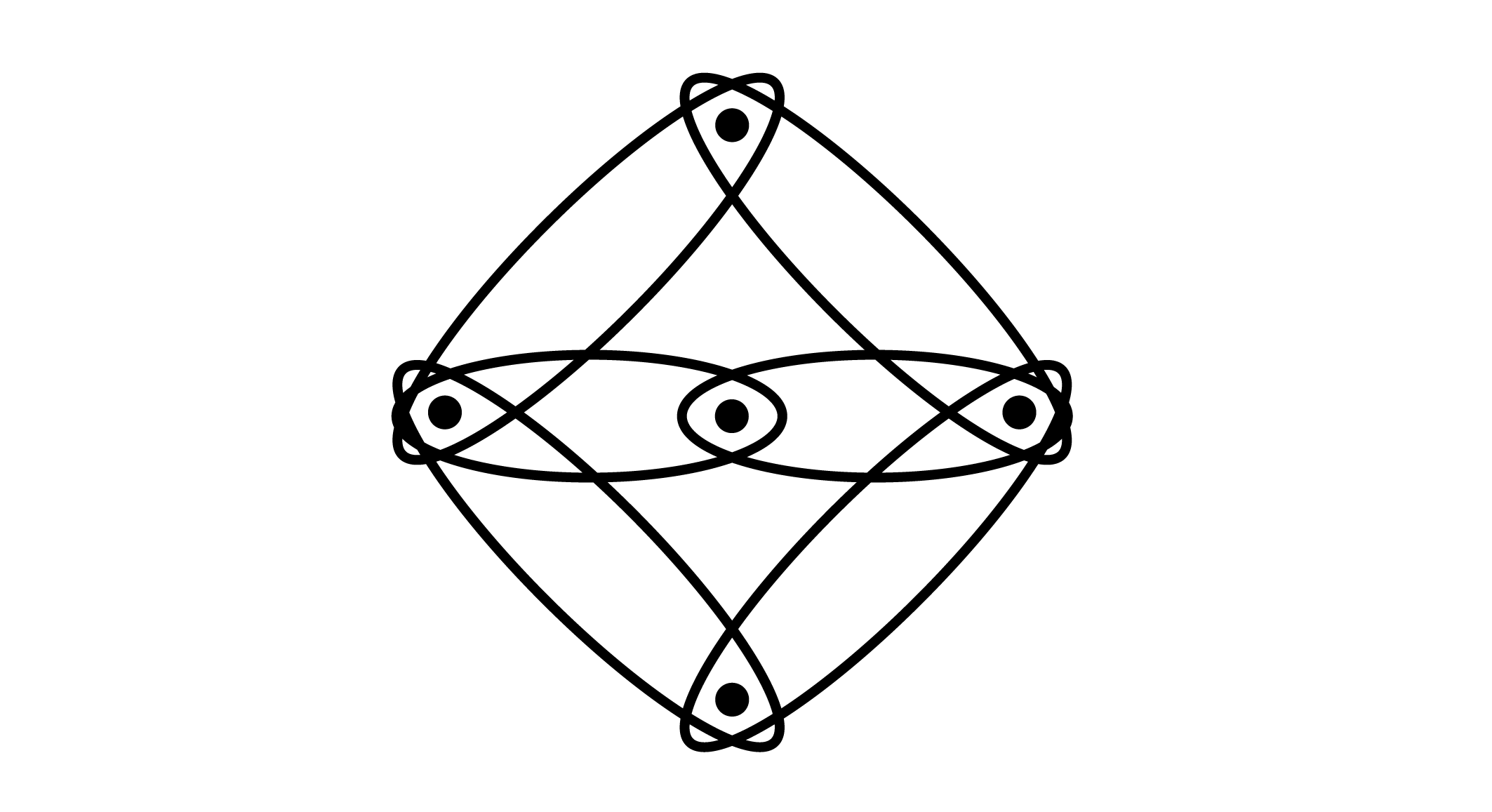}
    \caption{A ``matrix model" made of Haar-Ising bonds which realizes one more Coxeter group not so far discussed: the hyperbolic $\{4,6\}$ tiling.}
    \label{fig:4_6}
\end{figure}

One direction going forward is to understand the nature of the conserved quantities at the level of the group algebra in these various models, and what that means for the putative emergent hydrodynamics of energy density. For the groups $G_L$, we conjectured that the only integrals of motion are trivial, i.e. living in $\text{alg}(e,\Delta)$. This implies a very strong non-integrability of the HI chain. One would therefore expect energy transport to be diffusive. In Ref.~\cite{pollock_2025}, we found that $q=2$ chains up to length $L=20$ were not clearly diffusive, but we believe that the transport should be purely diffusive for large $q$. To prove this, we likely need to get a handle on the emergent geometry of the hopping problem for large $L$ and see how a diffusive scaling $t^{-1/2}$ for two-point functions arises at the level of the group word dynamics. A related question, in the case where braids are added, is whether one could prove Yang-Baxter integrability and see how a ballistic scaling $t^{-1}$ of two-point functions arises.

Another direction is to study further the correspondence between Coxeter groups in a given presentation and local Hamiltonians. One can turn the RMT single particle duality around; given a Coxeter group, can we take advantage of asymptotic freeness of random matrices to build a ``matrix model" which realizes the desired hopping problem? For hyperbolic tilings, beyond the cases $G_{3,4,5}$ discussed in the text, we give the example of a $\{4,6\}$ tiling in Fig.~\ref{fig:4_6} using HI random matrices arranged in a certain geometry.

\acknowledgements

The authors are grateful for discussions with Benedikt Placke, Tathagata Basak, Jonas Hartwig, Rustem Sharipov, Anushya Chandran, Marko \v{Z}nidari\v{c}, Toma\v{z} Prosen, and especially with Alexey Khudorozhkov who pointed us to the literature on Coxeter groups. We thank R\'emi Mosseri for sharing some data with us. K.P. and T.I. acknowledge support from the National Science Foundation under Grant No. DMR-2143635. J.D.K. acknowledges support from the U.S. Department of Energy under Grant No. DE-SC0023692. J. R. acknowledges financial support from the Royal Society through the University Research Fellowship No. 201101.

\begin{appendix}

\section{\texorpdfstring{$G_L$}{GL} word growth, moments, convergence of Lanczos coefficients, and two-point functions}\label{sec:dos_details}

\begin{figure*}
    \centering
    \includegraphics[width=\linewidth]{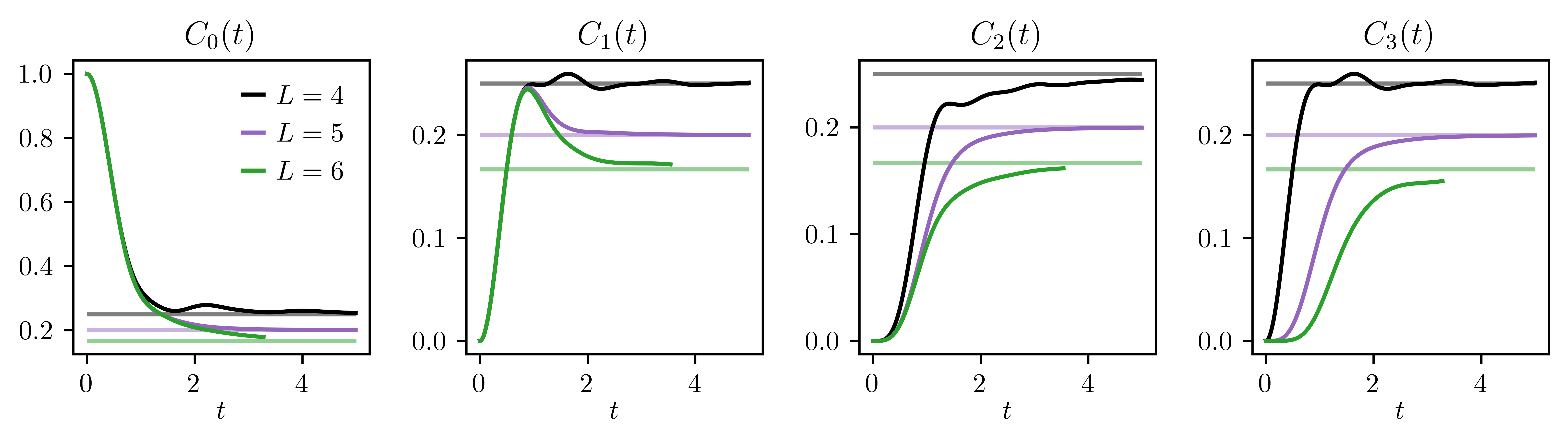}
    \caption{Two point functions for the Haar Ising chain obtained from the group word dynamics on large finite Cayley graphs of $G_L$; results are cut off self-consistently when edge effects occur. Horizontal lines are $L^{-1}$.}
    \label{fig:larger_L}
\end{figure*}

\begin{figure}
    \centering
    \includegraphics[width=0.6\columnwidth]{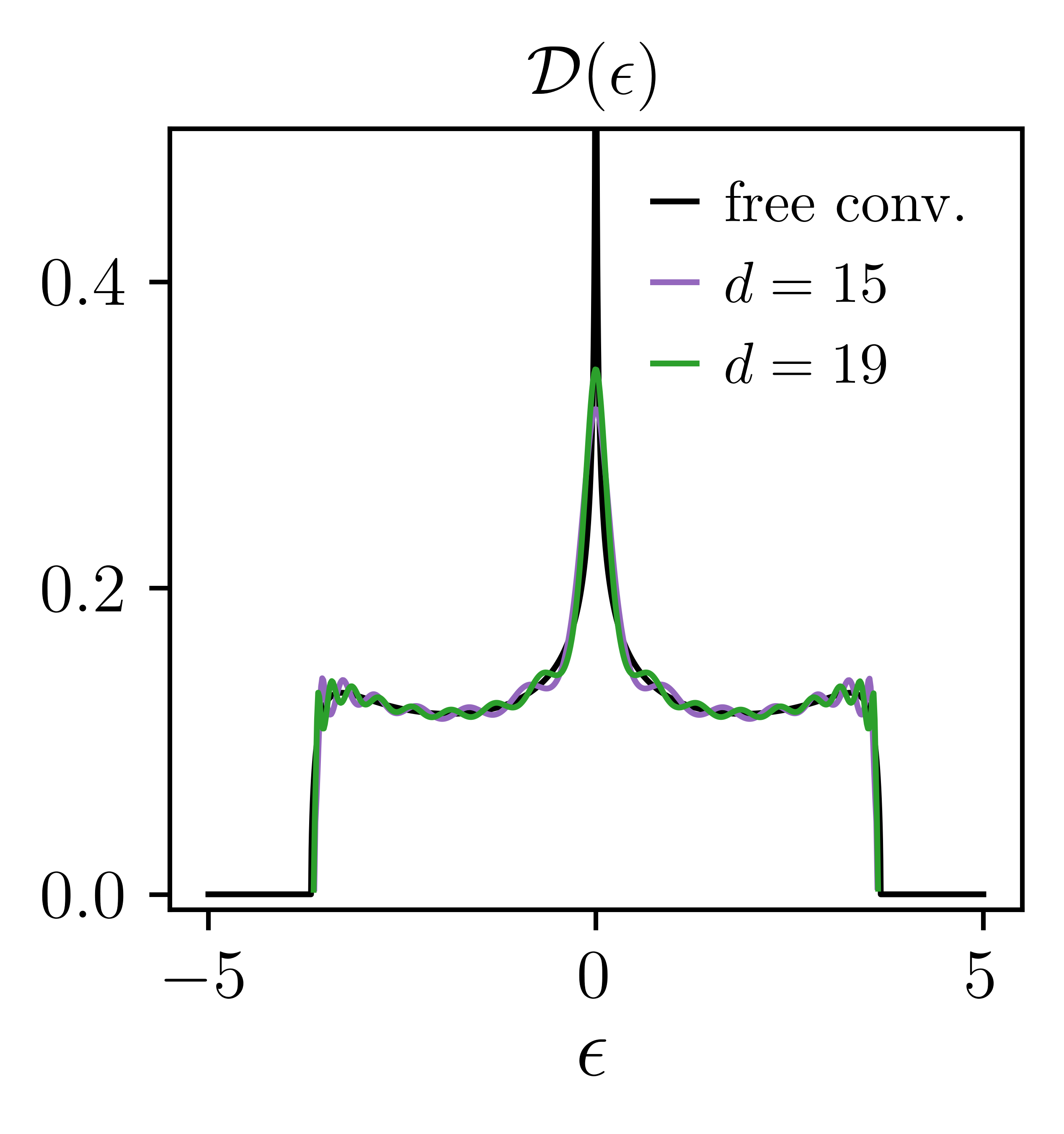}
    \caption{An approximate $L=4$ density of states obtained via the recursion method on finite clusters of the graphs of radius $d=15,19$. The $L=4$ graph in particular struggles to converge due to a peak (likely a van Hove singularity) at $\epsilon=0$. This method is compared against a different one (black curve), namely the subordination iteration method for free convolution, obtained in our previous paper \cite{pollock_2025}.}
    \label{fig:covergence}
\end{figure}

Here we show more data for two-point functions with $L=4,5,6$. These are Fig.~\ref{fig:larger_L}. We also show that the density of states for $G_4$ likely has a Van Hove singularity. This comes with oscillations of the $b_n$, and consequently oscillations of the approximated density of states around the true curve; see Fig.~\ref{fig:covergence}. Finally, we include some extensive data obtained from our computational group theory calculations, including the word growth, number of returning walks (a.k.a. co-growth) and some plots demonstrating the convergence (or lack thereof) of the Lanczos coefficients $b_n$ and estimates of their asymptotic values. The data are collected in Figs.~\ref{fig:Lanczos},\ref{tab:growth},\ref{tab:cogrowth}.

\begin{figure*}
    \setlength{\belowcaptionskip}{50pt}
    \centering
    \includegraphics[width=0.9\linewidth]{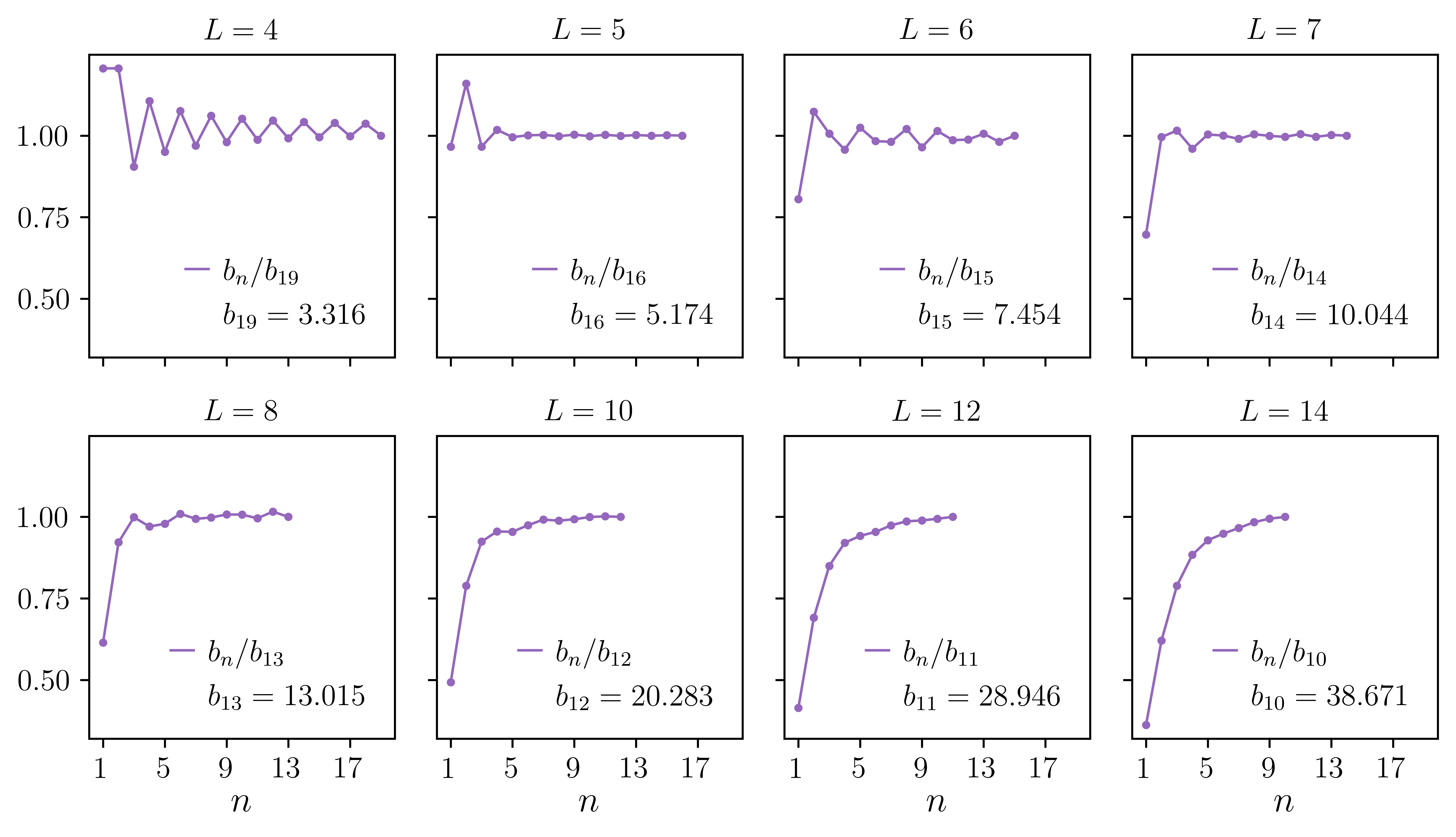}
    \caption{The Lanczos coefficients $b_n$ obtained via Lanczos tridiagonalization of the hopping matrix $\Delta$ on the largest accessible OBC cluster of the Cayley graph of $G_L$ up to graph radius $d=19,16,15,14,13,12,11,10$ for $L=4,5,6,7,8,10,12,14$, respectively. The quoted $b_d$ correspond to estimates of $b_\infty$, the converged result, which is clearly quite good for e.g. $L=5,7$ but less converged for other cases. This is the reason we focused on $L=5$ in the main text. }
    \label{fig:Lanczos}
    \begin{tabular}{r|*{13}{r}}
        $L$ & \multicolumn{13}{c}{$V(d)$ for $1\leq d \leq 13$} \\
        \midrule
        4 & 5 & 15 & 39 & 97 & 237 & 575 & 1391 & 3361 & 8117 & 19599 & 47319 & 114241 & 275805 \\
        5 & 6 & 21 & 61 & 166 & 441 & 1161 & 3046 & 7981 & 20901 & 54726 & 143281 & 375121 & 982086 \\
        6 & 7 & 28 & 90 & 264 & 744 & 2060 & 5660 & 15500 & 42388 & 115852 & 316564 & 864924 & 2363076 \\
        7 & 8 & 36 & 127 & 400 & 1191 & 3445 & 9815 & 27742 & 78086 & 219311 & 615252 & 1724997 & 4834929 \\
        8 & 9 & 45 & 173 & 583 & 1831 & 5527 & 16303 & 47433 & 136865 & 392925 & 1124557 & 3212367 & 9165535 \\
    \end{tabular}
    \caption{The exact growth function $V(d)$, or ``volume", i.e. number of vertices on the Cayley graph within graph distance $d$ of the origin $e$, for the family of groups $G_L$.}\label{tab:growth}
    \begin{tabular}{r|*{10}{r}}
        $L$ & \multicolumn{10}{c}{$\braket{e|\Delta^n|e}$ for $0\leq n \leq 20$ (even)} \\
        \midrule
        4 & 1 & 4 & 32 & 304 & 3136 & 33984 & 380672 & 4367360 & 51024896 & 604736512  \\
        5 & 1 & 5 & 55 & 755 & 11495 & 185915 & 3129055 & 54177065 & 958176215 & 17229311765  \\
        6 & 1 & 6 & 84 & 1536 & 31812 & 707976 & 16518060 & 398550984 & 9861416916 & 248830694376  \\
        7 & 1 & 7 & 119 & 2737 & 72835 & 2108757 & 64506071 & 2051455637 & 67164559483 & 2249185991437  \\
        8 & 1 & 8 & 160 & 4448 & 145904 & 5272128 & 202891408 & 8162154624 & 339384897968 & 14478049050752  \\
    \end{tabular}
    \caption{The exact co-growth function $\braket{e|\Delta^n|e}$, i.e. the number of walks of length $n$ that begin and end at $e$ on the Cayley graph, for the family of groups $G_L$.}
    \label{tab:cogrowth}
\end{figure*}

\section{Mazur bounds}\label{sec:mazur}

Mazur showed that long-time averages of autocorrelation functions can be bounded from below if one knows some integrals of motion of the dynamics \cite{mazur_1969,dhar_2021}. These bounds are expected to be tight if one knows and includes all integrals of motion. Let us suppose that for any particular realization, the only integrals of motion are powers of the Hamiltonian $H^k$. We write for the long time average
\begin{equation}
    \overline{F(t)} = \lim_{T\rightarrow \infty} \int_0^{T} \frac{dt}{T} F(t).
\end{equation}
Let us see what type of Mazur bound can be obtained with the set of integrals of motion $H^k$. For a single instance of the RMT we obtain the Mazur bound $\overline{C_0(t)}\geq B$ with
\begin{equation}\label{eq:fixed_realization}
    B = \sum_{nm} \frac{\text{tr}(h_iH^n)}{q^L} [C^{-1}]_{nm} \frac{\text{tr}(H^mh_i)}{q^L} ,
\end{equation}
where $C_{nm} = q^{-L}\text{tr}(H^{n+m})$. If we also assume that as $q\rightarrow \infty$, expectation values become self-averaging, and so all quantities can be replaced by their average values without correlations between the matrix elements of $C_{nm}$ and the quantities $q^{-L}\text{tr}(h_iH^n)$, we obtain
\begin{equation}
     B = \sum_{nm} \braket{V_i \Delta^n}_e [\widetilde{C}^{-1}]_{nm} \braket{\Delta^m V_i}_e ,
\end{equation}
with $\widetilde{C}_{nm} = \braket{\Delta^{n+m}}_e$, where we wrote $\braket{e|\cdot|e} = \braket{\cdot}_e$ for brevity. Then, we note that
\begin{equation}
    \braket{V \Delta^n}_e = L^{-1} \braket{\Delta^{n+1}}_e
\end{equation}
by symmetry, since $\braket{V_i \Delta^i}_e$ is the number of walks of length $i+1$ that begin in the particular direction $V_i$, and all of these directions are equivalent. Using also that
\begin{equation}
    \sum_{m=0}^\infty [\widetilde{C}^{-1}]_{nm}  \braket{\Delta^{m+1}}_e = \delta_{n,2}
\end{equation}
due to the particular Hankel form of $\widetilde{C}$, we find a Mazur bound of $B = L^{-1}$. This is also what would be obtained with just $H$.

Now, assuming that with high probability as $q\rightarrow \infty$ powers of $H$ are all the charges, we expect to have saturated the bound, i.e.
\begin{equation}
    \overline{C_0(t)} = L^{-1}
\end{equation}
with high probability as $q\rightarrow \infty$. Looking forward, a technical step is to show that correlations between various quantities in Eq.~\ref{eq:fixed_realization} can be neglected. 

\section{Braids lead to higher charges.}\label{sec:charges_in_involution}

Here, we demonstrate explicitly that for $L\geq 5$, the quantity $Q_4$ in the main text is also a conserved charge in involution with $Q_3$. We will employ some shorthand notation. For a fixed index $i$, we will write $g_{i+r}$ as $r$ and all expressions will have an implicit sum on $i$.
\begin{claim}\label{claim:overlapping}
    For a chain of length $L\geq 7$, including all overlapping densities, the commutator becomes in the shorthand notation
    \begin{multline}\label{eq:comm}
        [Q_4,Q_3] = \bigg[\big[[3,4],4 + 2\times 5\big],[1,2]+[2,3]+[3,4] \\
        +[4,5] + [5,6] + [6,7] \bigg].
    \end{multline}
\end{claim}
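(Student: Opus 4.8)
The plan is to treat the claim as a \emph{locality} reduction, not an explicit evaluation. First I would write both charges as translation-invariant sums of local densities, $Q_3 = \sum_j B_j$ with $B_j = [g_j, g_{j+1}]$ and $Q_4 = \sum_i A_i$ with $A_i = [[g_i, g_{i+1}], g_{i+1} + 2 g_{i+2}]$. The density $A_i$ is an element of the group algebra built only from the generator set $S_A^{(i)} = \{g_i, g_{i+1}, g_{i+2}\}$, while $B_j$ is built only from $S_B^{(j)} = \{g_j, g_{j+1}\}$. By bilinearity of the commutator, $[Q_4, Q_3] = \sum_{i,j} [A_i, B_j]$, so the task becomes identifying which density-density commutators survive.

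Second, I would invoke translation invariance of $\widetilde{S}_L$: the shift $g_k \to g_{k+1}$ is a group automorphism, so the double sum is invariant under the simultaneous relabeling $i \to i+1$, $j \to j+1$. This lets me fix the $Q_4$ density once and for all --- the claim fixes it at generator indices $\{3,4,5\}$ --- and retain only the single implicit sum over $i$ that the shorthand notation carries. The content of the claim is then exactly the list of $Q_3$ densities $B_j$ that can fail to commute with this one fixed $A$.

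Third --- the crux --- I would apply the defining relations of $\widetilde{S}_L$, in particular that $g_a$ and $g_b$ commute whenever their indices differ by more than one modulo $L$. The key structural lemma is that if every index in $S_A$ differs by more than one from every index in $S_B^{(j)}$, then $A$ and $B_j$ are polynomials in pairwise-commuting generators and hence $[A, B_j] = 0$ identically. The surviving densities are therefore precisely those whose support $\{j, j+1\}$ contains an index within distance one of $\{3,4,5\}$, i.e. $j \in \{1,2,3,4,5,6\}$, which yields exactly the six densities $[1,2], [2,3], [3,4], [4,5], [5,6], [6,7]$ appearing on the right-hand side; the neighbors $[0,1]$ and $[7,8]$, and all densities farther out, have separated support and drop out.

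Finally, I would justify the hypothesis $L \geq 7$, which I expect to be the main obstacle, since it is the one genuinely nonlocal subtlety. The seven generator indices $1,\dots,7$ appearing in the reduced expression must be pairwise distinct modulo $L$; this fails precisely when $L \leq 6$, in which case the periodic identification (e.g.\ $g_7 \equiv g_1$ at $L=6$) makes index~$7$ adjacent to index~$1$ and thereby generates additional ``wraparound'' overlaps beyond the six listed, consistent with the mod-$L$ coefficient prescription noted in the main text. The care here lies in confirming that for $L \geq 7$ the six terms are the \emph{complete} set with the correct multiplicity of one each, and that no boundary density is spuriously omitted or double counted. Once this reduction is secured, the remaining work --- expanding the six nested commutators using $g_i^2 = e$ and the braid relation and verifying that they cancel to give $[Q_4, Q_3] = 0$ --- is the tedious but routine computation deferred to the rest of the appendix.
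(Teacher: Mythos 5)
Your proposal is correct and follows essentially the same route as the paper's (much terser) proof: fix the $Q_4$ density by translation invariance, keep the single implicit sum, and retain exactly those $Q_3$ densities that fail to commute with it—the six terms $[1,2],\dots,[6,7]$—discarding the rest because generators whose indices differ by more than one commute. Your explicit support-based commutation lemma and the mod-$L$ distinctness argument for the hypothesis $L\geq 7$ are just fleshed-out versions of what the paper compresses into the phrase ``all densities with overlapping support, for which there are $6$ terms.''
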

\begin{proof}
 Above, we started indexing the $Q_4$ density at $i+3$ simply to avoid writing negative symbols (we can begin indexing wherever we want by translation symmetry). Again, there is an implicit sum on $i$. We include all densities with overlapping support coming from $Q_4$ and $Q_3$, for which there are $6$ terms.
\end{proof}

To show that Eq.~\ref{eq:comm} vanishes, we will repeatedly make use of translation symmetry (valid since we are summing over all translations) and the group relations. A simplification which helps organize the calculation is the following. For the purposes of showing that the above commutator vanishes, there turns out to be an effective reflection symmetry around bond $i+4$. Since we are summing over translations and not reflections, this is not a real symmetry valid for any expression, but it turns out to be correct for the above commutator. So, we first show that
\begin{claim}\label{claim:reflection}
For the purposes of computing $[Q_4,Q_3]$ the following effective reflection symmetry around $4$ holds:
    \begin{multline}\label{eq:reflect}
        \bigg[\big[[3,4],4 + 2\times 5\big],[4,5] + [5,6] + [6,7] \bigg] \\
        = \bigg[\big[[5,4],4 + 2\times 3\big],[4,3] + [3,2] + [2,1] \bigg]
    \end{multline}   
where we have made the replacements $3\leftrightarrow 5$, $2\leftrightarrow 6$, and $1\leftrightarrow 7$. Here, $2 \times r$ stands for twice $g_{i+r}$.
\end{claim}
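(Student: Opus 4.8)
My plan is to recast the asserted equality as an invariance statement under a genuine group automorphism, and then verify that invariance by word reduction using the braid relation. Introduce the reflection map $\Phi\colon g_j \mapsto g_{c-j}$ (indices read modulo $L$, with $c$ a fixed constant). Because $g_j^2=e$, because the braid relation $(g_jg_{j+1})^3=e$ is carried to $(g_{c-j}g_{c-j-1})^3=e$ (again a braid), and because far-commutation is preserved, $\Phi$ is an automorphism of $\widetilde S_L$; extend it linearly to the group algebra. Writing the left-hand side of Eq.~\eqref{eq:reflect} as $\sum_i A_i$, with $A_i$ the displayed nested commutator based at $i$, a term-by-term application of $\Phi$ together with a relabeling $i\mapsto i'=c-i-8$ of the implicit translation sum sends $A_i$ exactly to the summand of the right-hand side based at $i'$. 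Hence $\Phi(\mathrm{LHS})=\mathrm{RHS}$ as a purely formal identity, and the content of the claim is the nontrivial statement that $\mathrm{LHS}$ is itself $\Phi$-invariant.

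First I would carry out that formal reduction carefully, checking that the asymmetric weight $g_{i+4}+2g_{i+5}$ inside the $Q_4$ density maps to $g_{i'+4}+2g_{i'+3}$ and that the three current terms $[4,5]+[5,6]+[6,7]$ go to $[4,3]+[3,2]+[2,1]$, so no stray factors or boundary terms are generated. This step also pins down the range of validity: one needs $L$ large enough (here $L\ge 7$, matching Claim~\ref{claim:overlapping}) that the support $\{i+1,\dots,i+7\}$ never wraps around the periodic chain, so that only the infinite-chain relations $g^2=e$, far-commutation, and the braid are in play, and the ``long'' relation Eq.~\eqref{eq:long} and its translates do not intrude.

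The core of the argument is then to prove $\Phi(\mathrm{LHS})=\mathrm{LHS}$. I would expand every nested commutator into a signed sum of words of length at most five in the generators, and reduce each word to its short-lex geodesic normal form using $g^2=e$, the far-commutation rules, and above all the braid move $g_ag_{a+1}g_a=g_{a+1}g_ag_{a+1}$. The braid move is precisely the mechanism that interchanges an index pattern increasing away from bond $4$ with its mirror image decreasing toward bond $4$, so it is what converts a reflected word back into an unreflected one. After reduction I would use translation invariance to shift the base index $i$ and align terms, then check that the signed multiset of normal-form words produced by $\mathrm{LHS}$ agrees with that produced by its reflection.

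The main obstacle I anticipate is the bookkeeping in this last step rather than any conceptual difficulty: the $Q_4$ density overlaps the nearest current terms, so several length-five words collapse to shorter words under the braid, and one must track the commutator signs through two levels of nesting while confirming that the coefficient $2$ in $g_{i+4}+2g_{i+5}$ is exactly the value for which the reflected and unreflected reductions coincide (indeed this coefficient was fixed when $Q_4$ was defined, and the present claim is partly a consistency check on it). As a partial sanity check requiring no expansion, one can verify that $\mathrm{LHS}$ is even under the transpose anti-automorphism $\tau$ with $\tau(g_j)=g_j$: since the $Q_4$ density is $\tau$-even and the current part is $\tau$-odd, their commutator is $\tau$-even, which is consistent with, though weaker than, the full reflection invariance to be proved.
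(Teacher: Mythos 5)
Your proposal is correct, and it organizes the proof in a genuinely different way from the paper. Your key move is to promote the ``effective reflection'' to an actual algebraic object: the map $\Phi(g_j)=g_{c-j}$ is indeed an automorphism of $\widetilde{S}_L$ (it preserves $g_j^2=e$, carries braids to braids, and preserves far-commutation), and your relabeling $i\mapsto i'=c-i-8$ of the translation sum correctly exhibits the right-hand side of Eq.~\eqref{eq:reflect} as $\Phi(\mathrm{LHS})$, so the claim becomes precisely the statement that the translation-summed left-hand side is $\Phi$-invariant. This is a clean formalization of the paper's remark that the reflection is ``not a real symmetry'' of individual terms yet holds for this particular commutator. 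You then verify invariance by brute force: expand into signed words of length at most five, reduce to geodesic normal form, and compare multisets after translation shifts --- roughly seventy products to reduce, guaranteed to succeed since the identity is true, and well suited to computer algebra. The paper instead minimizes expansion by structural moves: it first checks that the $[4,5]$ term satisfies the reflection identity on its own, then uses translation symmetry and repeated Jacobi identities to bring the extreme term $\big[[[3,4],5],[6,7]\big]$ into reflected form, so that everything collapses to the single residual identity $\big[[[2,3],4-323],[4,5]\big]=\big[[[5,4],3-454],[3,2]\big]$, verified by a twenty-term expansion using braids. Your route buys conceptual clarity and mechanization; the paper's buys a much shorter hand calculation. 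One technical point to make explicit in your write-up: $\Phi$ preserves word length but not short-lex order, so after applying $\Phi$ the images of normal-form words must themselves be re-reduced (using commutations and braid moves only, since length is preserved) before the multiset comparison --- a finite extra step, but one your description glosses over.
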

\begin{proof}
This follows after breaking the problem into a few steps. The first equality regards the leftmost term in the above expression. This has the symmetry on its own:
\begin{equation}
    \bigg[\big[[3,4],4 + 2\times 5\big],[4,5] \bigg] = \bigg[\big[[5,4],4 + 2 \times 3\big],[4,3] \bigg]. 
\end{equation}
The next step is to show that the rest of the terms also have the effective reflection together, i.e. that
\begin{multline}
    \bigg[\big[[3,4],4 + 2\times 5\big],[5,6] + [6,7] \bigg] \\
    = \bigg[\big[[5,4],4 + 2\times 3\big],[3,2] + [2,1] \bigg].
\end{multline}
Here, we note that the most offensive term can be transformed to make the symmetry manifest:
\begin{multline}
    \bigg[\big[[3,4],5\big],[6,7] \bigg] = \bigg[\big[[1,2],3\big],[4,5] \bigg] \\
    = \bigg[\big[[5,4],3\big],[2,1] \bigg].
\end{multline}
In the first equality, we used translation symmetry to shift, and in the second equality we used the Jacobi identity multiple times. At this point it remains to show only that
\begin{equation}
    \bigg[\big[[2,3],4 - 323 \big],[4,5] \bigg]  = \bigg[\big[[5,4],3 - 454 \big],[3,2] \bigg]
\end{equation}
which can be directly verified by expanding out all $20$ terms on either side and using the braid relations in the group to see the equality. This concludes the proof of Claim~\ref{claim:reflection}.
\end{proof}
Finally, we show that
\begin{claim}
    $[Q_4,Q_3] = 0$ for chains of length $L\geq 5$.
\end{claim}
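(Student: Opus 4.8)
The goal is to show the commutator $[Q_4,Q_3]$ vanishes identically in $\widetilde{S}_L$ for $L\geq 5$. The natural strategy is to reduce the $L\geq 5$ case to the $L\geq 7$ computation already set up in Claim~\ref{claim:overlapping}, and then to exploit the effective reflection symmetry of Claim~\ref{claim:reflection} to collapse the six overlapping terms into fewer independent pieces. First I would invoke Claim~\ref{claim:overlapping} to write $[Q_4,Q_3]$ as the bracket of $\big[[3,4],4+2\times 5\big]$ against the sum $[1,2]+[2,3]+[3,4]+[4,5]+[5,6]+[6,7]$. Then, using Claim~\ref{claim:reflection}, I would pair the ``right-half'' terms $[4,5]+[5,6]+[6,7]$ with their reflected ``left-half'' images $[4,3]+[3,2]+[2,1]$, so that the two halves are shown to contribute equal (or cancelling) amounts, leaving only the central term $[3,4]$ plus a manifestly symmetric residue to handle directly.

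The remaining work is a finite, bounded expansion. After the reflection identification, what survives is the bracket of $\big[[3,4],4+2\times 5\big]$ with the self-reflecting piece $[3,4]$ together with the residual contributions from the reflection step. I would expand each nested commutator into a sum of group words, systematically reducing every word using the three tools already emphasized in the excerpt: the relations $g_i^2=e$, the braid relations $(g_ig_{i+1})^3=e$ (equivalently $g_ig_{i+1}g_i=g_{i+1}g_ig_{i+1}$), and translation invariance of the implicit sum over $i$. The Jacobi identity will be useful to reorganize triple commutators into forms where the braid relation can be applied, exactly as was done for the ``most offensive term'' in the proof of Claim~\ref{claim:reflection}. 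Because $L\geq 5$ guarantees that the generators appearing (spanning indices roughly $i$ through $i+6$ before reduction, but only $i$ through $i+4$ after using that far-separated generators commute) do not wrap around and collide under periodicity in a way that introduces unexpected extra relations, no boundary terms beyond those tracked by reading coefficients modulo $L$ arise.

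The final step is to verify that the fully expanded and reduced expression is the zero element of the group algebra, i.e.\ that all the surviving group words cancel in pairs. I would organize the bookkeeping by grouping terms according to their reduced word radius and then matching coefficients word-by-word; translation invariance lets me fix one index and check cancellation of each distinct orbit of words under the cyclic shift.

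\textbf{The main obstacle.} The hard part is not any single conceptual step but controlling the combinatorial explosion: each $Q_4$ density is already a nested triple commutator, so bracketing it against six $Q_3$ densities produces on the order of dozens of group words, and ensuring that the braid reductions are applied consistently (especially distinguishing genuine cancellations from words that merely \emph{look} distinct before reduction) is error-prone. The reflection symmetry of Claim~\ref{claim:reflection} is precisely the device that tames this, roughly halving the independent terms, so the crux is to confirm that after invoking it the leftover central and residual terms cancel cleanly rather than leaving a stubborn nonzero remainder that would signal either an algebra slip or a genuine obstruction at small $L$.
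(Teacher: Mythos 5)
For $L\geq 7$ your route is essentially the paper's: invoke Claim~\ref{claim:overlapping}, apply the reflection identity of Claim~\ref{claim:reflection}, and finish with braid/translation algebra. One caveat: you misstate what the reflection buys. The two halves do not ``cancel, leaving only the central term $[3,4]$''; rather, the bracket against $[4,5]+[5,6]+[6,7]$ becomes \emph{minus} the bracket of the reflected density against $[1,2]+[2,3]+[3,4]$, so everything combines into the single expression $\big[\big[[3,4],4+2\times 5\big]-\big[[5,4],4+2\times 3\big],\,[1,2]+[2,3]+[3,4]\big]$. The paper then collapses the difference of densities, via the group relations, to $2\times(3-5-343+454)$, after which two one-line commutator identities (using translation symmetry and braids) finish the proof. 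Your brute-force word-by-word expansion could substitute for that final collapse, so for $L\geq 7$ the plan is workable, if less clean.

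The genuine gap is the small-$L$ cases, which are exactly where the claim's lower bound sits. Your assertion that $L\geq 5$ rules out wrap-around collisions is false. The overlapping terms genuinely involve the seven consecutive generators $g_{i+1},\dots,g_{i+7}$ (for instance $\big[\big[[3,4],5\big],[6,7]\big]$ does not vanish and spans symbols $3$ through $7$), and these indices collide under periodicity for $L<7$: for $L=5$ one has $g_{i+5}=g_i$, $g_{i+6}=g_{i+1}$, $g_{i+7}=g_{i+2}$, and moreover $g_{i+5}=g_i$ now \emph{braids} with $g_{i+4}$ instead of commuting with it; for $L=6$ one has $g_{i+7}=g_{i+1}$. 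Consequently Claim~\ref{claim:overlapping}, stated for $L\geq 7$, does not apply verbatim --- for $L=5$ all five $Q_3$ densities overlap with a given $Q_4$ density, so the six-term enumeration is simply wrong there. The paper treats these cases separately: for $L=6$ it checks that Eq.~\eqref{eq:reflect} survives when symbols are read modulo $6$ and that none of the subsequent algebraic steps change, and for $L=5$ it verifies $[Q_4,Q_3]=0$ by direct computation without the reflection device. As written, your argument establishes nothing for $L=5$ and $L=6$, so the claim as stated remains unproven under your proposal.
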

\begin{proof}
Assuming $L\geq 7$ and using Claims~\ref{claim:overlapping},\ref{claim:reflection}, we can express
\begin{multline}\label{eq:shorthand}
    [Q_4,Q_3] = \bigg[\big[[3,4],4 + 2\times 5\big]-\big[[5,4],4 + 2\times 3\big],\\
    [1,2]+[2,3]+[3,4] \bigg].    
\end{multline}
By using the group relations, the difference of the two commutators in the first line already simplifies:
\begin{multline}
    \big[[3,4],4 + 2\times 5\big]-\big[[5,4],4 + 2\times 3\big] \\ 
    = 2 \times (3 - 5 - 343 + 454).
\end{multline}
We can also see by translation symmetry and braids that
\begin{equation}
    \big[3-5, [1,2]+[2,3]+[3,4]\big] = \big[3+4,[3,4]\big] = 0
\end{equation}
and finally that
\begin{multline}
    \big[-343 + 454, [1,2]+[2,3]+[3,4]\big] \\ = \big[343,[3,4]\big] = 0
\end{multline}
proving that braids are sufficient for another translation invariant local charge $Q_4$ in involution with $Q_3$.

The above result holds explicitly as written for $L\geq7$. However, Eq.~\ref{eq:reflect} also holds for $L=6$ if we read the symbols modulo $6$, in particular, $7\equiv 1$. None of the algebraic relations used in the rest of the proof change and so from this follows that $[Q_4,Q_3] = 0$ for the $L=6$ chain. Finally, one can directly verify without the idea of an effective reflection symmetry that the commutator still vanishes in the $L=5$ chain.
\end{proof}

\end{appendix}

\bibliography{references}

\end{document}